\theoremstyle{plain}
\newtheorem{thm}{Theorem}
\newtheorem{lemma}[thm]{Lemma}
\newtheorem*{thm*}{Theorem}
\newtheorem*{lem*}{Lemma}
\newtheorem*{prop*}{Proposition}
\newtheorem*{cor*}{Corollary}
\newtheorem*{conj*}{Conjecture}
\theoremstyle{remark}
\newtheorem{remark}{Remark}
\theoremstyle{definition}
\newtheorem{ex}[thm]{Example}
\newcommand{\diag}{\operatorname{diag}}
\newcommand{\MRCA}{\operatorname{MRCA}}
\newcommand{\tc}{\text{:}} 
\newcommand{\tp}{\text{+}} 
\newcommand{\njst}{$\text{NJ}_{st}$ }
\newcommand{\coalModel}{MSC+$N$\xspace}
\newcommand{\gtrmu}{GTR+$\mu$\xspace}
\newcommand{\dsp}{\displaystyle}
\title[Species trees and log-det distance]{Species tree inference from genomic sequences using the log-det distance}
\author{Elizabeth S. Allman}
\address{PO Box 75666, Department of Mathematics and Statistics, University of Alaska Fairbanks, Fairbanks, AK, 99775}
\email{e.allman@alaska.edu}
\author{Colby Long}
\address{Mathematical Biosciences Institute, The Ohio State University, 1735 Neil Ave., Columbus OH, 43210}
\email{long.1579@mbi.osu.edu}
\author{John A. Rhodes}
\address{PO Box 75666, Department of Mathematics and Statistics, University of Alaska Fairbanks, Fairbanks, AK, 99775}
\email{j.rhodes@alaska.edu}
\date{\today}                                           
\begin{document}
\maketitle

\begin{abstract} 
The log-det distance between two aligned DNA sequences was  introduced as a tool for statistically consistent 
inference of a gene tree under simple non-mixture models of sequence evolution.
Here we prove that the log-det distance, coupled
with a distance-based tree construction method, also permits consistent inference of species trees under 
mixture models appropriate to aligned genomic-scale sequences data.
Data may include sites from many genetic loci, which 
evolved on different gene trees due to incomplete lineage sorting on an ultrametric species tree, with different 
time-reversible substitution processes.
The simplicity and speed of distance-based inference suggests log-det based methods should serve as benchmarks 
for judging more elaborate and computationally-intensive species trees inference methods.
\end{abstract}

\section{Introduction}\label{sec:Intro}

The main result of this work is that statistically consistent inference  of a species tree topology is possible using a very simple log-det distance method, for very general mixture models appropriate to genomic data. Such mixtures not only take into account the coalescent process with fluctuating population sizes, but also allow for variation in base-substitution models across the genome, and certain types of time-dependent heterotachy. While these results require that the species tree is ultrametric when measured in generations, this is plausible for many biological datasets.  Since implementing species tree inference through log-det distance computations and distance-based tree selection is both simple and fast, this method should provide a simple comparison for assessing the behavior of more elaborate methods.

Log-det distances for a phylogenetic mixture model are computed
from weighted sums of the site-pattern frequency matrices associated to 
the component models. 
Thus, the behavior of the distance
depends upon the behavior of the  determinant on sums of matrices.
Although little can be said in general about 
determinants of sums, our approach is to first
investigate properties of the frequency matrices for single-class multispecies coalescent with general time-reversible substitution models. We show these are symmetric, positive definite, 
and thus define positive definite quadratic forms. 
It is properties of these quadratic forms that provide the algebraic means 
necessary to establish our main results.
To our knowledge, this connection to quadratic forms has not been used in phylogenetic work previously. 
This work thus exposes new semi-algebraic aspects of phylogenetics.

\smallskip

To place this work in context,
the increasing availability of genomic-scale  datasets of many aligned gene sequences has made clear that the phylogenetic 
trees inferred for individual genes often differ from one another, and thus cannot be used as proxies for the species tree relating the taxa
as a whole.  Although gene tree discordance might be due to errors in gene tree inference, there are also important biological processes that 
can cause it, which should be taken into account through appropriate modeling.

When the source of gene tree conflict is attributed to incomplete
lineage sorting, the multispecies coalescent (MSC) model \cite{PamiloandNei88,Rannala2003}, combined with standard models of sequence evolution by base substitutions, provides the accepted framework. 
Inference may be performed with a Bayesian approach (e.g., MrBayes/BEST \cite{MrBayes,Liu2007}, *BEAST \cite{Heled2010}), in which a posterior on gene trees and the species tree is computed from sequence data. However,
this approach is  limited in the number of species and genes that can be practically analyzed due to the heavy computational burden. Other methods and software (e.g., STEM \cite{Kubatko2009}, MP-EST \cite{Liu2010}, STAR \cite{Liu2009,ADR2013}, \njst/ASTRID \cite{Liu2011,ASTRID,ADR2018}, ASTRAL-II \cite{ASTRALII}) require less computation, by first inferring individual gene trees, and then treating them as ``data'' for a second inference of a species tree. Although the second stage of this analysis is generally provably consistent, the impact of the error introduced by the first stage is poorly understood.

There are also statistically consistent methods based on the MSC model that avoid inferring individual gene trees at all.
Instead, these methods account for incomplete lineage sorting by viewing concatenated aligned gene sequences as a \emph{coalescent mixture}. (Concatenating gene sequences and analyzing them as if they evolved on a common tree is \emph{not} such a method, and is known to be statistically inconsistent for some parameter regions \cite{kubatko2007inconsistency,roch2015likelihood}.) The coalescent mixture is described by integrating over all possible gene trees, weighted by their probabilities under the MSC. While this leads to a rather complex distribution, it has turned out to be more amenable to analysis than one might naively expect. The first use of this viewpoint, to our knowledge, was in the software SNAPP \cite{RoyChoudhury2008, Bryant2012}, but it also underlies SVDQuartets \cite{CK14} and METAL \cite{Roch15}.

The simplest of these last methods to describe, and the one our results extend, is the METAL approach, in which gene sequences are concatenated, pairwise distances are computed on the alignment, and then a distance method, such as
Neighbor Joining, is used to produce an unrooted species tree. 
As the authors of \cite{Roch15} state, it is ``somewhat surprising" 
that such an approach could be statistically consistently since the standard pairwise distance formulas are derived under a single tree model, and nothing about their form suggests they should apply to mixtures of any kind.
Nonetheless, they provide proofs assuming a certain coalescent mixture of Jukes-Cantor substitution processes, and suggest the Jukes-Cantor assumption can be relaxed. While their focus is on theoretical understanding of the species tree inference problem, and not on practical analysis, a comparative study on
simulated gene datasets  in  \cite{Rusinko17} found the METAL protocol outperformed many other methods. (See also the Discussion in Section \ref{sec:disc} for more comments on appropriate simulation studies.)

\smallskip

Making one seemingly small change to the METAL protocol, in using the log-det distance \cite{Steel94, Lockhart94, Lake94}, we show the resulting inference scheme is a statistically consistent means of inferring the species tree topology under a much broader model than considered in \cite{Roch15}. Our model is much closer, in several aspects, to what might plausibly describe empirical data than any of the models assumed in developing other methods for concatenated genomic-scale data \cite{Bryant2012,CK14,Roch15}. 

The model's most significant new feature is that rather than a base-substitution process that is described by a single general time-reversible (GTR) rate matrix $Q$ across the genome, it allows for an arbitrary mixture with many different $Q$s. This framework  is more in accord with experience analyzing single gene sequences, where empiricists generally consider $Q$ a parameter to be inferred, and find different values for different genes. Even in comparison to the ``somewhat surprising" result of \cite{Roch15}, we believe this result is quite surprising.

Our model also relaxes the assumption made by many methods that the population size and mutation rates on each edge of the species tree remain constant over time. We instead allow population size to vary on each branch of the species tree (as a function of time, measured in generations). 
Finally, for each $Q$, our model allows a changing mutation rate over time, either modeled by a time-dependent scalar-valued rate multiplier, or a more complex process that is also edge-dependent but exhibits a certain symmetry across the genome, giving a relaxation of a molecular clock assumption.
Our strongest results do, however, require that the species tree be ultrametric when edge lengths are given in generations. While this last assumption is not always met in empirical studies, it is one that biological observation can provide  justification for in many instances. If the species tree is not ultrametric in generations, we still extend the results of \cite{Roch15}, but statistical consistency for mixtures with several $Q$s does not hold.

\medskip

This paper is organized as follows:  In Section \ref{sec:GTR-and-LD} we present our basic models and definitions. Section \ref{sec:alg} provides the algebraic lemma on quadratic forms and determinants that underlies our results. In Section \ref{sec:Fab} we obtain properties of the pattern frequency arrays arising from a single rate-matrix coalescent mixture model so that the algebraic lemma can be applied. Section \ref{sec:main} brings the earlier arguments together in our main results for an ultrametric species tree.

In Section \ref{sec:ratesym}, we show our results also apply to an extended model allowing more complicated scalar rate-variation across edges of the tree, provided that rate-variation
satisfies a certain symmetry property. Section \ref{sec:nonultra} considers non-ultrametric species trees, obtaining weaker results and indicating through an example that the full mixture result does not hold. Section \ref{sec:disc} gives concluding remarks.

\section{Definitions and basic lemmas}\label{sec:GTR-and-LD}

In this section we define and establish basic facts about sequence evolution models and coalescent models.  While these involve the standard general time-reversible (GTR) base-substitution models and the multispecies coalescent model (MSC), we emphasize aspects often omitted in species tree inference studies.
In many other works involving these models even if populations sizes or mutation rates are allowed to change, they are assumed to be constant on each edge of the species tree. To broaden this framework, we formalize time-dependent scalar-valued rate variation in the substitution processes, and  give an explicit treatment of changing population sizes in the coalescent process.

There are three time scales inherent in our models. First, on the species tree, time is measured in the natural biological units of \emph{generations}. Second, time in generations and population size (which may vary with time) determine another time scale, in \emph{coalescent units} of generations inversely scaled by population size, which standardizes the rate of the coalescent process. Third, time in generations together with a 
mutation rate (which may vary with time) determine the third time scale, in  \emph{substitution units} of expected number of substitutions-per-site, which standardizes the rate of the mutation process. Since we will also allow the mutation rate to differ for different classes in the mixture, the third scale can vary from class to class. Since researchers in phylogenetics may choose any of these scales as the fundamental one, we emphasize that our models will be formulated explicitly on the first scale, in generations. While the other scales are present, they appear only implicitly in our development.

\subsection{Coalescent process}

Let $(\sigma, \tau)$ denote a rooted, ultrametric species tree,  with topology $\sigma$ and branch lengths $\tau$ measured in number of generations, as in Figure \ref{fig:st}.  Viewing time $t$ from the present, $t=0$, to the past, $t>0$, gene trees are formed within populations on the species tree as modeled by the multispecies coalescent model \cite{PamiloandNei88,Rannala2003}.

\begin{figure}[h]
\centering
\includegraphics[height=2.in]{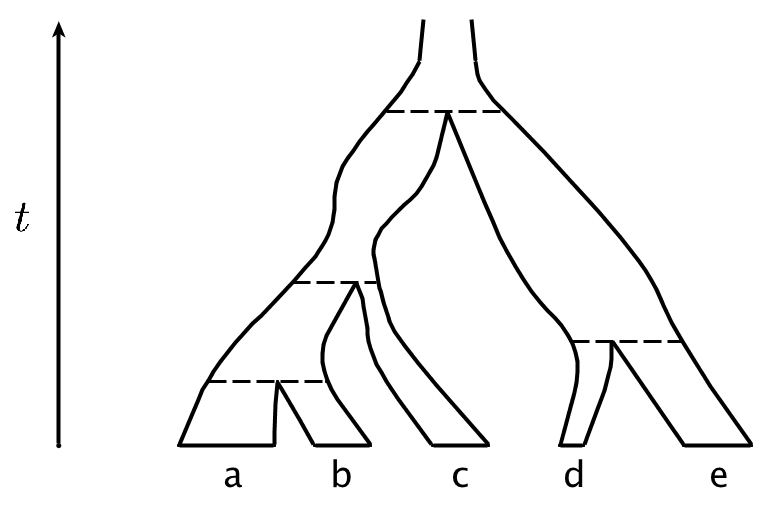}
\caption{A 5-taxon ultrametric species tree with time $t$ in generations before the present, and population functions $N_e(t)$
on each edge depicted using widths of pipes. The edge lengths $\tau$ are measured on the $t$-axis between 
the dotted lines indicating speciation events.}
\label{fig:st}
\end{figure}

On each edge $e$ of the species tree, corresponding to an ancestral population, a time-varying population size is modeled by a function $N_e :[0,\tau_e)\to \mathbb{R}^{>0}$, where $\tau_e$ is the length of $e$ and $N_e(t)$ denotes the population size at time $t$ above the child node on $e$.  For a nominal edge $e$ representing the population ancestral to the root of the tree, of length $\infty$, we also have such a function.
We require that population sizes are bounded above, $N_e(t)\le B$, as this  ensures that with probability 1 lineages eventually coalesce, and is a biological necessity. We also require that $1/N_e(t)$ be integrable over finite intervals.  

The multispecies coalsecent model can be succinctly summarized: If lineages are sampled from populations at the leaves of a species tree then, with time measured in generations into the past, 1) at any time $t\ge 0$, if $K$ lineages are present in a population of size $N(t)$, then the instantaneous rate at which coalescent events occur is $ {\binom K2}/{N(t)}$, and 2) when a coalescent event occurs, 2 of the $K$ lineages are chosen uniformly at random to be the pair that coalesce, so that $K-1$ lineages remain, beginning a new coalescent process, and 3) when lineages reach a node in the species tree, with $K_1$ and $K_2$ lineages from the two child populations, then a new process begins with the combined set of $K_1+K_2$ lineages.

For simplicity, we use the notation \coalModel for this coalescent process with changing population size.

\smallskip

Since our arguments focus on comparing only two sampled lineages at a time, we need an explicit calculation of
the distribution of coalescent times for two lineages in a single population, as given in the following lemma.

\begin{lemma} \label{lem:coalpdf} Suppose two lineages enter a population at $t=0$ with size function $N(t)$,
$0 \le t < \infty$.  Then the time to coalescence has probability density
$$c(t)=c(N;t)= \ell(t) \exp(-L(t)),$$
where $\ell(t) := 1/N(t)$ is the inverse population size, and $L(t) := \int_0^t \ell(\tau) d\tau$ its integral.
\end{lemma}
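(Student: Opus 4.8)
The plan is to derive the density by the standard competing-hazards argument, adapted to a time-inhomogeneous coalescent rate. Since there are exactly two lineages in the population, the only event that can occur is their coalescence, and by the model description the instantaneous rate of that event at time $t$ is $\binom{2}{2}/N(t) = 1/N(t) = \ell(t)$. So the setup is a single non-homogeneous point process with hazard rate $\ell(t)$, and the claim is just the formula for the density of the first (here, only) event time of such a process.

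First I would let $S(t)$ denote the survival function, i.e.\ the probability that the two lineages have \emph{not} yet coalesced by time $t$, so that the density is $c(t) = -S'(t)$ wherever $S$ is differentiable. Partitioning $[0,t]$ into small subintervals and using that, over $[s, s+\Delta s]$, the conditional probability of coalescing is $\ell(s)\Delta s + o(\Delta s)$ given no prior coalescence, I would obtain the relation $S(t+\Delta t) = S(t)\bigl(1 - \ell(t)\Delta t + o(\Delta t)\bigr)$, hence $S'(t) = -\ell(t) S(t)$, with initial condition $S(0) = 1$ (the two lineages enter the population uncoalesced at $t=0$). Alternatively, one can invoke directly that for an inhomogeneous Poisson/pure-death process with integrable rate $\ell$, the waiting time to the first event has survival function $\exp(-\int_0^t \ell(\tau)\,d\tau)$; the integrability of $1/N$ over finite intervals, assumed in the model, is exactly what makes $L(t) = \int_0^t \ell(\tau)\,d\tau$ well-defined and finite.

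Solving the linear ODE gives $S(t) = \exp(-L(t))$, and differentiating, $c(t) = -S'(t) = \ell(t)\exp(-L(t))$, which is the asserted formula. I would also remark that this is a genuine probability density on $[0,\infty)$: since $N_e(t) \le B$ we have $\ell(t) \ge 1/B > 0$, so $L(t) \to \infty$ as $t \to \infty$, whence $\int_0^\infty c(t)\,dt = -[\exp(-L(t))]_0^\infty = 1$, recovering the model's requirement that two lineages coalesce with probability $1$.

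The only real subtlety — and I would not belabor it — is justifying the passage from the discrete-generation or instantaneous-rate description of the coalescent to the continuous ODE for $S$; this is the usual continuous-time-limit argument and requires only that $\ell$ be locally integrable (so that $L$ is absolutely continuous and the fundamental theorem of calculus applies), which is part of the hypotheses. Everything else is a one-line ODE solution, so the lemma is essentially bookkeeping that fixes notation ($\ell$, $L$, $c$) for the rest of the paper.
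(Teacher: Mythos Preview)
Your proposal is correct and follows essentially the same route as the paper: both introduce the survival probability (your $S(t)$, the paper's $p(t)$), set up the ODE $p'(t)=-\ell(t)p(t)$ from the instantaneous coalescence rate $1/N(t)$, solve it as $\exp(-L(t))$, and differentiate to obtain $c(t)$. Your added remark that $c$ integrates to $1$ (using $N\le B$) is a nice bonus the paper omits, but otherwise the arguments are the same.
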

\begin{proof} 
In the MSC model, 
the instantaneous rate of coalescence for two lineages in a population of size $N$
is $1/N$. Thus if
$p(t)$ denotes the probability that the two lineages have not coalesced by time $t$, then
$$p'(t)=-\frac 1{N(t)}p(t),$$
and therefore 
$$p(t)=\exp \left( -\int_0^t \frac 1{N(\tau)}d\tau \right ).$$
Thus the density function for coalescent times is
$$c(t)=\frac d{dt} (1-p(t))=  \frac 1{N(t)} \exp \left( -\int_0^t \frac 1{N(\tau)}d\tau \right ).$$
\,
\end{proof}

\subsection{Substitution model}\label{subsec:GTR} 

For modeling the evolution of sequences composed of $k$ bases, we use a continuous-time Markov process, with a $k\times k$  instantaneous rate matrix $Q$ such that 1) the off-diagonal entries of $Q$ are non-negative, 2) the rows of $Q$ sum to 0, and 3) $Q$ has stationary distribution $\pi$, with positive entries and $\pi Q=0$. 
The \emph{general time-reversible model} (GTR)  includes the additional assumption 4) $\diag(\pi)Q$ is symmetric. At the root of a gene tree, sites in the ancestral sequence have bases chosen independently with distribution $\pi$.
The Markov matrix $M(\tau)=\exp(Q\tau)$ describes the cumulative substitution process over the time span $\tau$ in the absence of rate variation.  Speed-ups and slow-downs in the substitution process  are introduced through a time-dependent scalar-valued rate function $\mu:\mathbb R^{\ge0}\to \mathbb R^{>0}$,  which we assume to be integrable.  At any time $t$, the substitution process then has instantaneous rate $\mu(t)Q$, with $t$ measured in generations before the present.   We denote this substitution model by \gtrmu.

\begin{lemma}\label{lem:diagQ} Let $Q$ be a GTR rate matrix with stationary distribution $\pi$. Then  $Q=S\Lambda S^{-1}$ where $S=\diag(\pi)^{-1/2} U$ for some  orthogonal matrix $U$, and $\Lambda=\diag(\lambda)$ with $\lambda_1=0$, $\lambda_i\le 0$. If $Q\ne 0$, then $\lambda_i<0$ for some $i$.
\end{lemma}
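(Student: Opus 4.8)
The plan is to use the time-reversibility hypothesis (4) to symmetrize $Q$ by a diagonal conjugation, then apply the spectral theorem. Let $P=\diag(\pi)^{1/2}$, a positive-definite diagonal (hence symmetric) matrix, and set $\tilde Q := PQP^{-1}$. Writing this as $\tilde Q = P^{-1}\big(\diag(\pi)Q\big)P^{-1}$ and using that $\diag(\pi)Q$ is symmetric by (4) while $P^{-1}$ is symmetric, one sees that $\tilde Q$ is a real symmetric matrix. By the spectral theorem there is an orthogonal $U$ with $\tilde Q = U\Lambda U^T$, $\Lambda=\diag(\lambda)$ real. Then $Q = P^{-1}\tilde Q P = P^{-1}U\Lambda U^T P = S\Lambda S^{-1}$, where $S = P^{-1}U = \diag(\pi)^{-1/2}U$ and indeed $S^{-1} = U^T P = U^{-1}\diag(\pi)^{1/2}$. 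This is exactly the claimed factorization.

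Next I would pin down the signs of the $\lambda_i$. Condition (2), that the rows of $Q$ sum to $0$, says $Q\mathbf 1 = 0$, so $0$ is an eigenvalue of $Q$; correspondingly $\tilde Q(P\mathbf 1) = 0$, so after a permutation of the columns of $U$ (which leaves $U$ orthogonal) we may arrange $\lambda_1 = 0$. For the remaining eigenvalues I would combine two observations: they are real, since $\tilde Q$ is symmetric; and, by the Gershgorin circle theorem applied to $Q$ — whose diagonal entries satisfy $Q_{ii} = -\sum_{j\neq i}Q_{ij}\le 0$ by (1) and (2), with $\sum_{j\neq i}|Q_{ij}| = -Q_{ii}$ — every eigenvalue of $Q$ lies in the union of the discs $\{z : |z - Q_{ii}|\le -Q_{ii}\}$, each contained in the closed left half-plane. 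Since $\tilde Q$ and $Q$ are similar they have the same spectrum, so each $\lambda_i\le 0$.

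Finally, for the last assertion: if $Q\ne 0$ then $\tilde Q = PQP^{-1}\ne 0$, hence $\Lambda\ne 0$; by the previous paragraph no $\lambda_i$ can be positive, so $\lambda_i<0$ for some $i$. (Equivalently, $\Lambda = 0$ would force $Q = S\Lambda S^{-1} = 0$.)

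I do not anticipate a genuine obstacle here. The only points requiring care are that the conjugating matrix must be $\diag(\pi)^{1/2}$ rather than $\diag(\pi)$ in order to produce a \emph{symmetric} matrix, and that the orthogonal factor $U$ can be permuted so that the zero eigenvalue occupies the first slot without destroying orthogonality; everything else is the spectral theorem plus a one-line Gershgorin estimate.
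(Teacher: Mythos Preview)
Your proof is correct. The symmetrization step $\tilde Q=\diag(\pi)^{1/2}Q\diag(\pi)^{-1/2}$ and the application of the spectral theorem match the paper exactly, as does the handling of the zero eigenvalue and the final ``if $\Lambda=0$ then $Q=0$'' observation.

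Where you diverge is in establishing $\lambda_i\le 0$. The paper argues that the symmetrized matrix $A=\diag(\pi)^{1/2}Q\diag(\pi)^{-1/2}$ can be written as a nonnegative combination of the rank-one negative-semidefinite matrices $-e_{ii}+e_{ij}+e_{ji}-e_{jj}=-(e_i-e_j)(e_i-e_j)^T$, and hence is itself negative semidefinite. You instead combine the reality of the spectrum (from symmetry of $\tilde Q$) with Gershgorin discs for $Q$, whose centers $Q_{ii}\le 0$ and radii $-Q_{ii}$ force every eigenvalue into $(-\infty,0]$. Both routes are short and standard; the paper's decomposition is a bit more structural (it exhibits $A$ explicitly as a sum of NSD pieces), while your Gershgorin argument is arguably more elementary and avoids having to verify that the decomposition actually reproduces the diagonal of $A$. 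Either is perfectly adequate for the lemma.
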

\begin{proof} The matrix $A=\diag(\pi)^{1/2}Q\diag(\pi)^{-1/2}$ can be written as a convex sum of 
negative-semidefinite matrices of the form $-e_{ii}+e_{ij}+e_{ji}-e_{jj}$, where $e_{ij}$ is the matrix whose only non-zero entry is a $1$ in the $ij$ position. It follows that $A$ is negative-semidefinite and $A=U\Lambda U^T$ for some $U$ orthogonal and $\Lambda$ diagonal with non-positive entries.  Algebra gives the claimed diagonalization of $Q$. Moreover, since rows of $Q$ sum to 0, $Q$ has at least one 0 eigenvalue. If all eigenvalues are 0, then $Q=0.$
\end{proof}

We next characterize the pairwise expected pattern frequencies for the \gtrmu model. For a vector $ v$, we use $\exp(v)$ to denote the entrywise application of the exponential function.

\begin{lemma} \label{lem:Markov} Let 
$Q=S\diag ( \lambda ) S^{-1}$ be the diagonalization of a  GTR rate matrix,   and $\mu(t)$ be a scalar-valued rate function. Then the Markov transition matrix $M(x)= M(\mu,Q,x)$ describing cumulative base substitutions with rate $\mu(t)Q$ for $t\in [0,x]$
is $$M(x)=S\diag(\exp(s(x)\lambda))S^{-1},$$ where $s(x)=\int_0^x \mu(t) dt$. Thus 
the pairwise pattern frequency array 
$F = \diag(\pi)M$ is symmetric positive definite.
\end{lemma}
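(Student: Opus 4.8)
The plan is to establish the formula for $M(x)$ by solving the differential equation governing the substitution process, and then to read off symmetry and positive definiteness of $F$ by rewriting it as a congruence of a positive diagonal matrix, using the special form of $S$ provided by Lemma~\ref{lem:diagQ}.

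First I would identify $M(x)$. The cumulative substitution process with instantaneous rate $\mu(t)Q$ is the time-inhomogeneous Markov process whose transition matrix satisfies the forward equation $\frac{d}{dx}M(x) = \mu(x)\,M(x)\,Q$ with $M(0)=I$. The key simplification is that every instantaneous generator $\mu(t)Q$ is a scalar multiple of the single fixed matrix $Q$, so these generators commute across all times and no time-ordering issue arises; equivalently, passing to the substitution time scale $s(x)=\int_0^x\mu(t)\,dt$ merely reparametrizes a homogeneous process with generator $Q$. Concretely, one checks that $M(x)=\exp(s(x)Q)$ solves the forward equation: $s$ is absolutely continuous with $s'(x)=\mu(x)$ almost everywhere (here the integrability hypothesis on $\mu$ is used), and since $Q$ commutes with $\exp(s(x)Q)$ one gets $\frac{d}{dx}\exp(s(x)Q)=\mu(x)Q\exp(s(x)Q)$ with the correct initial value; uniqueness of solutions then pins down $M(x)$. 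Substituting the diagonalization $Q=S\diag(\lambda)S^{-1}$ and applying $\exp$ entrywise on the diagonal gives $M(x)=S\diag(\exp(s(x)\lambda))S^{-1}$.

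Next I would deduce that $F=\diag(\pi)M(x)$ is symmetric positive definite. By Lemma~\ref{lem:diagQ} we may take $S=\diag(\pi)^{-1/2}U$ with $U$ orthogonal, so $S^{-1}=U^T\diag(\pi)^{1/2}$, and therefore
$$F=\diag(\pi)^{1/2}\,U\,\diag(\exp(s(x)\lambda))\,U^T\,\diag(\pi)^{1/2}=W\diag(\exp(s(x)\lambda))W^T,$$
where $W:=\diag(\pi)^{1/2}U$. Symmetry is then immediate. For positive definiteness, $W$ is invertible, being a product of the invertible matrix $\diag(\pi)^{1/2}$ (invertible because $\pi$ has positive entries) with the orthogonal $U$; and $\diag(\exp(s(x)\lambda))$ has strictly positive diagonal entries since the exponential is positive — indeed each entry lies in $(0,1]$ because $s(x)\ge 0$ and $\lambda_i\le 0$. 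Hence $F$ is congruent to a positive definite diagonal matrix, so it is positive definite.

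The only step requiring real care, as opposed to routine linear algebra, is the first one: confirming that the time-inhomogeneous Markov semigroup is exactly $\exp(s(x)Q)$ when $\mu$ is assumed only integrable. The commuting-generators observation removes the usual obstruction of time-ordered exponentials, and the weak regularity of $\mu$ is absorbed by working with the absolutely continuous function $s$ and Carath\'eodory solutions of the differential equation; I would state this cleanly without belaboring the measure-theoretic details. Everything after that follows directly from Lemma~\ref{lem:diagQ}.
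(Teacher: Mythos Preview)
Your proposal is correct and follows essentially the same route as the paper: verify that the candidate $M(x)=S\diag(\exp(s(x)\lambda))S^{-1}$ solves the governing ODE with the right initial condition, then invoke Lemma~\ref{lem:diagQ} to write $S=\diag(\pi)^{-1/2}U$ and read off symmetry and positive definiteness of $F$. Your congruence formulation $F=W\diag(\exp(s(x)\lambda))W^{T}$ with $W=\diag(\pi)^{1/2}U$ is exactly the paper's quadratic-form computation $y^{T}Fy=z^{T}\diag(\exp(s(x)\lambda))z$ (with $z=W^{T}y$) rephrased, so there is no substantive difference.
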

\begin{proof} Defining 
$M(t)=S\diag(\exp(s(t) \lambda)) S^{-1}$, 
one checks 
\begin{align*}
\frac d{dt} 
M(t)=&\; S\diag(\mu(t) \lambda) \diag(\exp(s(t) \lambda)S^{-1} \\
=&\; \mu(t) (S\diag( \lambda)S^{-1}) (S \diag(\exp(s(t) \lambda)S^{-1}) =\mu(t)QM(t)
\end{align*}
and $M(0)=I$, so $M(x)=M(\mu,Q,x)$. 

By Lemma \ref{lem:diagQ}, we may take $S=\diag(\pi)^{-1/2} U$ for an orthogonal $U$.
If $y \in \mathbb{R}^k$, defining $z^T = y^T\diag(\pi)^{1/2} U$, we have 
$y^TFy = z^T \diag(\exp(s(t) \lambda)) z $. Thus, $F$ is symmetric positive definite.
\end{proof}

\subsection{Coalescent mixture model, and mixtures of coalescent mixtures} \label{subsec:modelM}

By a \emph{coalescent mixture model} on a rooted metric species tree  $(\sigma,\tau)$ on taxa $X$ with population functions $\{N_e\}$ for each edge $e$ of $\sigma$, GTR parameters $Q$ and $\pi$, and scalar-valued rate function $\mu$ we mean the following model of sequence production: For each site, a gene tree is first independently sampled according to the \coalModel model. Then bases for the various taxa in $X$ are sampled according to  the base substitution process \gtrmu on the site's gene tree. This model, with further assumptions that $N_e$ and $\mu$ are constant or edgewise constant, is used in \cite{Bryant2012,CK14,Roch15,LK18}. 

Note that in \cite{LK18} an argument is given that by a reparameterization one can assume $\mu(t)=1$ for all $t$. While that also applies to the coalescent mixture model here, it will not apply to the more general ``mixture of coalescent mixtures'' that will be our main focus, since one cannot in this way reparameterize all mixture classes simultaneously. While we could instead reparameterize to make $N_e(t)=1$ for all $t,e$, that is also problematic for our goals, as it would potentially destroy the ultrametricity of the species tree.

\smallskip

Although the coalescent mixture gives each site its own gene tree, it is easy to see that the same expected site pattern frequencies are produced when this assumption is relaxed to the following model of site non-independence that is more appropriate to genomic data: For multisite genes, suppose the gene length $n$ (in sites) has distribution $f$, independent of all other parameters. Genomic sequence data is then produced for each gene by first sampling a gene length $n$ via $f$, then sampling a gene tree by \coalModel, and finally sampling $n$ sites via \gtrmu. 
While expected pattern frequencies under this gene-independent model and the site-independent model are the same, convergence to the expected values from increasingly large samples will generally be slower for the gene-independent model.

\smallskip

The more general model we study is a \emph{mixture of coalescent mixtures}. We assume that in addition to a species tree and population functions, there is some finite number $m$ of classes, and for each class $1\le i\le m$ parameters $Q_i,\pi_i$ are fixed. In addition there is another parameter, a distribution $ w$, viewed as a non-negative vector whose $m$ entries sum to 1. The data generation process is then that for each site, a class $i$ is first  chosen according to $ w$. Then \coalModel and \gtrmu are used  with parameters $Q_i,\pi_i$  as in the coalescent mixture to generate bases. 

To avoid cumbersome notation, we simply designate this model, which is the main focus of this work, by $\mathcal M$. Its parameters are $(\sigma,\tau)$, $\{N_e\}$, $w$, 
$\{(Q_i,\mu_i)\}_{i=1}^m$. As with the coalescent mixture, one can see that the expected site pattern frequencies of this model are unchanged if it is modified to describe concatenated genes whose lengths are independent of all other parameters. We do not explicitly incorporate this into our model to simplify the presentation. Our results would also extend easily to a model with infinitely-many classes, including a continuum of them, through primarily notational changes to our arguments

\smallskip

Finally,  we consider variants of these models, allowing more general rate variation and non-ultrametric species trees,
but delay presentation of these extensions until Sections \ref{sec:ratesym}  and \ref{sec:nonultra}.

\subsection{Log-det distance}
Given aligned sequence data for two taxa, with $k$ possible bases, the log-det distance between them is defined as follows: 
Let $F_{ab}$ be the $k\times k$ matrix of relative site-pattern frequencies, with the $ij$ entry giving the proportion of sites 
in the sequences exhibiting base $i$ for $a$ and base $j$ for $b$. Let $f_a$ denote the vector of row sums of $F_{ab}$, 
and $f_b$ the vector of column sums, so that these marginalizations are 
simply the proportions of various bases in the sequences of $a$ and $b$. With $g_a$ and $g_b$ the products of the 
entries of $f_a,f_b$, respectively,
\begin{equation}d_{LD}(a,b) =-\frac 1k \left ( \ln |\det(F_{ab})|-\frac 12 \ln (g_ag_b)\right ).\label{eq:dLD}\end{equation}

The log-det distance \cite{Steel94, Lockhart94}, also known as the paralinear distance \cite{Lake94}, was originally developed in the context of the general Markov model, for which it was shown to give an additive function on paths in trees. It is therefore additive on submodels, including the GTR model. For
GTR parameters $Q,\pi$ and time $t$ between the two taxa, the formula for the log-det distance simplifies and
the additive property is apparent:  Specifically, letting $\lambda_1 = 0\ge \lambda_2, \ldots, \lambda_k$ denote the eigenvalues of $Q$, 
$$
d_{LD}(a,b) = -\frac 1k (\lambda_2 + \ldots + \lambda_k)t.
$$

For mixtures of GTR models, in which $F_{ab}$ becomes a sum of many $F_{ab}^i$ for different $Q_i$, we know of 
no such direct way to understand the log-det distance in terms of the parameters, since the determinant does not respect sums.

\subsection{Identifiability and inference}\label{subsec:idinf}
Our results below are phrased in terms of \emph{identifiability} of the species tree topology using either the 3-point or 4-point  
conditions \cite{SempleSteel} applied to log-det distances. This means that given the expected log-det distances 
under the model, one can determine the species tree topology. This result is only a theoretical one, in that the 
expected distances cannot be determined from any  finite data.

However, as the size (number of sites) of a data set produced from the model increases to $\infty$, the empirical 
log-det distances approach the expected ones in probability. Since standard distance-based methods 
(UPGMA for ultrametric trees; NJ, BioNJ, Minimum Evolution, etc.~for any metric trees) are known to correctly 
infer tree topologies in the presence of a sufficiently small amount of noise, this implies that use of them leads to 
statistically consistent inference of the tree topology.

\section{An Algebraic Lemma}\label{sec:alg}

In this section we establish the key lemma enabling an understanding of the log-det distance of equation \eqref{eq:dLD} on mixture models.
In general it is not true that if $\det F_i>\det G_i$ then $\det(\sum_iw_i F_i)>\det (\sum_i w_i G_i)$ , even when all $w_i> 0$. This complicates the analysis of the behavior of the log-det distance, and cast doubts that is might have any good properties for mixtures. However, GTR mixture components $F_i,G_i$, even with changing population size, scalar-valued rates, and a coalescent process, turn out to be associated with positive definite quadratic forms, as we see in Section \ref{sec:Fab}.

\begin{lemma}\label{lem:main}
Suppose for each $i$, $F_i$ and $G_i$ are $k\times k$ symmetric positive definite matrices
 such that $y^TF_iy\ge y^TG_iy$ for every $y\in \mathbb R^k$ with the inequality strict for some $i$ and $y$.
For $w_i> 0$, let $$F=\sum_{i=1}^m w_iF_i,\ \ G=\sum_{i=1}^m w_iG_i.$$
Then $$\det F>\det  G.$$
\end{lemma}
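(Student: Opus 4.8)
The hypothesis $y^TF_iy\ge y^TG_iy$ for all $y$ says precisely that each difference $F_i-G_i$ is positive semidefinite. Since the weights satisfy $w_i>0$, the matrix $H:=F-G=\sum_i w_i(F_i-G_i)$ is again symmetric positive semidefinite, while $G=\sum_i w_iG_i$ is symmetric positive definite. The first thing I would record is that $H\ne 0$: if $\sum_i w_i(F_i-G_i)=0$, then for every $y$ the sum $\sum_i w_i\,y^T(F_i-G_i)y$ of nonnegative terms vanishes, forcing $y^T(F_i-G_i)y=0$ for all $i$ and all $y$, hence $F_i=G_i$ for every $i$ --- contradicting the assumption that the inequality is strict for some $i$ and $y$. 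This is the one place positivity of the $w_i$ is used, via the fact that a nonnegatively weighted sum of positive semidefinite matrices vanishes only when each summand does.

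The core of the argument is then a congruence that turns the \emph{weak} Loewner inequality $F\succeq G$ into an eigenvalue statement. Because $G$ is positive definite it has a positive definite square root $G^{1/2}$, and I set
$$C:=G^{-1/2}\,F\,G^{-1/2}=I+G^{-1/2}HG^{-1/2}.$$
Then $C$ is symmetric, and $C-I=G^{-1/2}HG^{-1/2}$ is positive semidefinite (a congruence of $H$) and nonzero (since $G^{-1/2}$ is invertible and $H\ne 0$). Hence every eigenvalue of $C$ is $\ge 1$, and because $C$ is symmetric but $C\ne I$, at least one eigenvalue is strictly greater than $1$; therefore $\det C=\prod_j\lambda_j(C)>1$. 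Finally $\det F=\det\bigl(G^{1/2}CG^{1/2}\bigr)=\det G\cdot\det C>\det G$, as $\det G>0$, which is the claim.

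The only real obstacle is that $H=F-G$ need not be positive \emph{definite} --- the underlying data could make it singular --- so one cannot simply invoke strict monotonicity of $\det$ on the positive definite cone. Conjugating by the invertible matrix $G^{-1/2}$ sidesteps this, converting $F\succeq G$ into $C\succeq I$ while preserving the fact that the inequality is not an equality, after which the product-of-eigenvalues formula for $\det C$ delivers the strict inequality for free. (An alternative I would have in reserve is to integrate $\frac{d}{dt}\log\det(G+tH)=\operatorname{tr}\bigl((G+tH)^{-1}H\bigr)$ over $t\in[0,1]$, the integrand being positive because the trace of the product of a positive definite matrix and a nonzero positive semidefinite matrix is positive; but the congruence argument seems cleanest and most self-contained.)
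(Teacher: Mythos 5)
Your proof is correct, and it takes a genuinely different route from the paper's. You reduce everything to the Loewner-order statement: each $F_i-G_i$ is positive semidefinite, so $H=F-G\succeq 0$, the strict hypothesis (together with $w_i>0$) forces $H\ne 0$, and then the congruence $C=G^{-1/2}FG^{-1/2}=I+G^{-1/2}HG^{-1/2}$ has all eigenvalues $\ge 1$ with at least one $>1$, giving $\det F=\det G\cdot\det C>\det G$. Every step checks out, including the nondegeneracy argument (one could even note directly that the strict inequality at some $i_0,y_0$ gives $y_0^THy_0\ge w_{i_0}y_0^T(F_{i_0}-G_{i_0})y_0>0$). The paper instead diagonalizes $F=UDU^T$ and applies Hadamard's inequality to $U^TGU$, obtaining $\det G\le\prod_i u_i^TGu_i\le\prod_i u_i^TFu_i=\det F$, with strictness extracted from the equality case of Hadamard (if $U^TGU$ is diagonal then $F,G$ are simultaneously diagonalized and an eigenvalue comparison finishes). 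Your approach cleanly isolates the general principle $0\prec G\preceq F$, $F\ne G\Rightarrow\det G<\det F$, so the mixture structure enters only in verifying the Loewner inequality and its strictness, and the strict inequality falls out of the eigenvalues of $C$ without any equality-case analysis; the paper's argument avoids matrix square roots and works directly with the quadratic-form hypothesis via a short chain of inequalities, at the cost of a slightly more delicate treatment of when Hadamard's inequality is an equality. Your fallback via $\frac{d}{dt}\log\det(G+tH)=\operatorname{tr}\bigl((G+tH)^{-1}H\bigr)>0$ is also valid.
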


\begin{proof} Since $F_i, G_i$ are positive definite for each $i$, so are $F,G$.

Write $F= UDU^T$ where $D$ is 
diagonal and $U$ is orthogonal. 
Denoting the $i$th column of $U$ by $u_i$, we have
\begin{equation}
\det G = \det(U^TGU)
\leq 
\prod_{i=1}^k (U^TGU)_{ii}\label{eq:Had} =
\prod_{i=1}^k u_i^TGu_i
\end{equation}
where the inequality in \eqref{eq:Had} is Hadamard's for determinants of positive semidefinite matrices \cite{HornJohnson}. But using that
$y^TF_iy\ge y^TG_iy$ for all $y$ yields
\begin{equation}
\prod_{i=1}^k u_i^TGu_i 
\leq \prod_{i=1}^k u_i^TFu_i \label{eq:2nd}
=  \prod_{i=1}^k  u_i^TUDU^Tu_i 
=  \prod_{i=1}^k  D_{ii} = \det F.
\end{equation}
Thus $$\det G\le \det F.$$

In fact, the inequality  is strict, since Hadamard's  inequality \eqref{eq:Had} is strict unless
$U^T G U$ is diagonal. But in that case $F$ and $G$ are simultaneously diagonalized by $U$, and the fact that
$y^TF_iy\ge y^TG_iy$ for all $y$  with a strict inequality for some $i$ and $y$ implies that eigenvalues of $F$ are at least as large as corresponding eigenvalues of $G$, with at least one
strictly larger. This means that the inequality in \eqref{eq:2nd} must be strict.
\end{proof}

\section{Expected pattern frequency matrix under a coalescent mixture}\label{sec:Fab}

In this section we study the expected pattern frequency array for two sequences for a coalescent mixture model. Thus we have only a single \gtrmu process acting in concert with the \coalModel. Our goal is to establish results enabling us to later apply Lemma \ref{lem:main}  to mixtures of these models. Thus we characterize the expected pattern frequency matrices in terms of associated quadratic forms.

Because we consider only two lineages at a time,  these lineages are unable to coalesce until 
some time 
after
$x>0$, where $x$ is the time at which their most recent common ancestor species exists. Since only the population sizes 
of their common ancestors above time $x$ affect the coalescence of the lineages, we can consider a single population 
size function $N(t)$, $t\ge0$, where $N(t)$ for $t< x$ is irrelevant to the process. 

\begin{lemma}\label{lem:delaycoal} Consider a GTR rate matrix $Q\ne 0$, a scalar-valued
rate function $\mu(t)$, and a population function $N(t)$ for $t\ge 0$. 
For any $x\ge0$, let $C(x)=C(Q,\mu,N,x)$ be the expected site-pattern frequency array for two lineages that enter the population at time $0$ and undergo
substitutions 
at rate $\mu(t)Q$,
conditioned on the event that the lineages did not coalesce before time $x$. 

Then for all $0\ne y\in \mathbb R^k$ the function $y^TC(x)y$ is positive and decreasing in $x$, and strictly decreasing for some $y$.
\end{lemma}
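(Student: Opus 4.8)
The plan is to write $C(x)$ as an average over coalescent times of the single-tree pattern frequency matrices supplied by Lemma \ref{lem:Markov}, and then differentiate in $x$. Concretely, if the two lineages enter the population at time $0$, fail to coalesce before $x$, and then coalesce at time $x+u$ for some $u\ge 0$, the gene tree joining them has a total pendant length of $2(x+u)$ in generations, so the substitution distance between the two leaves is governed by $M(2(x+u))=M(\mu,Q,2(x+u))$ in the notation of Lemma \ref{lem:Markov}, with associated pattern frequency matrix $F(2(x+u))=\diag(\pi)M(2(x+u))$. The density of the coalescence time conditioned on no coalescence before $x$ is, by Lemma \ref{lem:coalpdf} and an elementary conditioning argument, $c(x+u)/p(x)$ for $u\ge 0$, where $p(x)=\exp(-L(x))$. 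Hence
$$
C(x)=\int_0^\infty \frac{c(x+u)}{p(x)}\,F\bigl(2(x+u)\bigr)\,du
     =\int_x^\infty \frac{c(t)}{p(x)}\,F(2t)\,dt,
$$
and for fixed $y$ we get $y^TC(x)y=\int_x^\infty \frac{c(t)}{p(x)}\,\phi(2t)\,dt$ where $\phi(r):=y^TF(r)y$.

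First I would record the needed facts about $\phi$. By Lemma \ref{lem:Markov}, with $S=\diag(\pi)^{-1/2}U$ and $z^T=y^T\diag(\pi)^{1/2}U$, one has $\phi(r)=\sum_j z_j^2 \exp(s(r)\lambda_j)$ where $s(r)=\int_0^r\mu$. This is strictly positive for $y\ne 0$ (so positivity of $y^TC(x)y$ is immediate from the integral representation and positivity of the density), and since $s$ is nondecreasing and each $\lambda_j\le 0$, $\phi$ is nonincreasing in $r$; because $Q\ne 0$, some $\lambda_j<0$, so for any $y$ with the corresponding $z_j\ne 0$, $\phi$ is \emph{strictly} decreasing on any interval where $\mu>0$ (which holds a.e. since $\mu$ is positive-valued). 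Thus $\phi(2t)$ is strictly decreasing in $t$ for such $y$.

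Next I would differentiate. Writing $y^TC(x)y = e^{L(x)}\int_x^\infty c(t)\phi(2t)\,dt$ and using $c(t)=\ell(t)e^{-L(t)}$, $L'(x)=\ell(x)$, the fundamental theorem of calculus gives
$$
\frac{d}{dx}\,y^TC(x)y
 = \ell(x)\,e^{L(x)}\!\int_x^\infty c(t)\phi(2t)\,dt \;-\; e^{L(x)}c(x)\phi(2x)
 = \ell(x)\Bigl(y^TC(x)y - \phi(2x)\Bigr).
$$
Since the conditional density is supported on $t\ge x$ and $\phi(2t)\le \phi(2x)$ there, the bracket is $\le 0$, and $\ell(x)=1/N(x)>0$, so $y^TC(x)y$ is decreasing in $x$ for every $y$. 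For the strictness claim, fix a single $y$ with $z_j\ne 0$ for some $j$ with $\lambda_j<0$: then $\phi(2t)<\phi(2x)$ on a set of $t>x$ of positive measure (using that $\mu>0$ a.e. makes $s$ strictly increasing), the density $c(t)/p(x)$ is positive there, so the bracket is strictly negative and the derivative is strictly negative for all $x$; in particular $y^TC(x)y$ is strictly decreasing for that $y$. (Alternatively, strictness follows directly by comparing $y^TC(x)y$ at two different values of $x$, since the conditional laws are stochastically ordered and $\phi\circ(2\,\cdot)$ is strictly decreasing.)

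The main obstacle is mostly bookkeeping rather than depth: one must be careful that the "time between the two taxa" entering the log-det and the substitution model is $2t$ (both pendant edges of the two-leaf gene tree), that conditioning on non-coalescence before $x$ correctly renormalizes the density by $p(x)=e^{-L(x)}$ and shifts its support to $[x,\infty)$, and that differentiation under the integral sign and the use of the fundamental theorem are justified — here integrability of $\ell$ on finite intervals and boundedness of $\phi$ (each $|\exp(s(r)\lambda_j)|\le 1$) together with $\int_x^\infty c(t)\,dt=p(x)<\infty$ suffice. The only genuinely model-dependent input is the spectral description of $F(r)$ from Lemma \ref{lem:Markov}, which is exactly what makes $\phi$ monotone, and the hypothesis $Q\ne 0$, which is what yields a $y$ achieving strict monotonicity.
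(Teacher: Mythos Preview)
Your approach is sound and more direct than the paper's, but there is one slip in the setup. When coalescence occurs at time $t=x+u$, each pendant edge carries the Markov matrix $M(\mu,Q,t)$ (substitutions over the real-time interval $[0,t]$), so by time-reversibility the pairwise array is $\diag(\pi)\,M(\mu,Q,t)^2$, with quadratic form $\sum_j z_j^2\exp\bigl(2s(t)\lambda_j\bigr)$. It is \emph{not} $\diag(\pi)\,M(\mu,Q,2t)$ with quadratic form $\sum_j z_j^2\exp\bigl(s(2t)\lambda_j\bigr)$: for non-constant $\mu$ one has $s(2t)\ne 2s(t)$ in general. This does not damage your argument---simply replace $\phi(2t)$ throughout by $\tilde\phi(t):=\sum_j z_j^2\exp\bigl(2s(t)\lambda_j\bigr)$, which is equally nonincreasing in $t$---but the formula as written is incorrect.

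With that correction your differentiation goes through, and the route is genuinely different from the paper's. The paper also diagonalizes to reduce to the scalar function $f(\lambda,\mu,N,x)=\int_0^\infty\exp\bigl(2s(0,x+z)\lambda\bigr)\,c_x(z)\,dz$, but then proves monotonicity in $x$ indirectly, via two auxiliary lemmas: a comparison principle showing that enlarging $N$ pointwise strictly decreases $f$, followed by a construction that replaces $N$ on $[x,y)$ by an arbitrarily large constant $u$ and lets $u\to\infty$, effectively forbidding coalescence on $[x,y)$. Your direct computation
\[
\frac{d}{dx}\,y^TC(x)y=\ell(x)\Bigl(y^TC(x)y-\tilde\phi(x)\Bigr)\le 0,
\]
valid because $y^TC(x)y$ is a probability-weighted average of $\tilde\phi$ over $[x,\infty)$ and $\tilde\phi$ is nonincreasing, is shorter and avoids those constructions entirely. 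The paper's detour yields the population-comparison lemma as a byproduct, but that lemma is not invoked elsewhere in the paper, so your shortcut loses nothing.
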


\begin{proof}

Let $ N_x(t) = N(x + t)$ and $ \mu_x(t) = \mu(x + t)$, and $M(\mu,Q,x)$ denote the Markov matrix describing the substitution process on a single lineage from time $0$ to $x$ as in Lemma \ref{lem:Markov}. Then, using the 
time-reversibility of the substitution process and the commutativity of the Markov matrices,
\begin{align}C(x)&=\int_0^\infty \diag(\pi) (M(\mu,Q,x)M(\mu_x,Q,z))^2 c_x(z) \, dz\notag\\
&=\diag(\pi) (M(\mu,Q,x))^2\int_0^\infty (M(\mu_x,Q,z))^2 c_x(z) \, dz\label{eq:extra}
\end{align}
where $c_x(z)=\ell_x(z) \exp(-L_x(z))$ is the pdf for coalescent times built from $N_x$, as in Lemma \ref{lem:coalpdf}.

Since by Lemma \ref{lem:Markov} the $M(\mu,Q,w)$ are simultaneously diagonalizable by $S=\diag(\pi)^{-1/2}U$
with diagonalization $\Lambda_{M(\mu,Q,w)}$, we find
$$U^T\diag(\pi)^{-1/2}C(x)\diag(\pi)^{-1/2}U =\Lambda^2_{M(\mu,Q,x)}\int_0^\infty (\Lambda_{M(\mu_x,Q,z)})^2 c_x(z) \, dz. 
$$
Letting $s(t_1,t_2)=\int_{t_1}^{t_2} \mu(t) dt$, it is enough to show that for every eigenvalue $\lambda$ of $Q$, the scalar-valued function 
\begin{align*}
f(\lambda,\mu, N, x)&=\exp(2s(0,x)\lambda)\int_0^\infty \exp(2s(x,x+z)\lambda)c_x(z) dz\\
&= \int_0^\infty \exp(2s(0,x+z)\lambda)c_x(z) dz
\end{align*}
is a decreasing function of $x$, and strictly decreasing for some $\lambda$. 
That is established through Lemma \ref{lem:changepop} and 
Lemma \ref{lem:boundpop} below, together with the fact that $Q$ has at least one negative eigenvalue.
\end{proof}

\begin{remark}\label{rem:extra} The matrix factors in equation \eqref{eq:extra} have a nice interpretation: 
Consider a 2-taxon species tree $(a\tc x,\, b\tc x)$ with branch lengths in generations. If there were 
no coalescent process --- that is, gene lineages came together immediately when entering a population, 
so that all gene trees exactly match the species tree --- then the expected pattern frequency array for 
sequences observed from $a$ and $b$ would be $\diag(\pi) M^2(\mu,Q,x)$, from the two branches 
of length $x$. The coalescent process, however, means gene lineages are delayed in coming together, 
so extra substitutions occur. The Markov matrix
$$\int_0^\infty (M(\mu_x,Q,z))^2 c_x(z)\,dz$$ describes those.
\end{remark}

\begin{remark} If the functions $N$ and $\mu$  are constant, then the matrix $$\int_0^\infty (M(\mu_x,Q,z))^2 c_x(z)\,dz$$ 
is independent of $x$. Thus  for an ultrametric species tree,  pattern frequency arrays for every pair of taxa are a 
particularly simple matrix product, since the second matrix factor encoding  
the extra substitutions due to the coalescent process is the same for all pairs of taxa. 
As a result, appropriately calculated  pairwise distances between taxa 
are simply inflated by a constant from their values in the absence of the coalescent process.
This is a main insight behind the METAL method of \cite{Roch15}, though in that work only the Jukes-Cantor model 
was used, and a much more detailed quantitative study of the distances was undertaken.
\end{remark}

In establishing that the function $f(\lambda,\mu, N, x)$ is decreasing through the following Lemmas \ref{lem:changepop} and 
\ref{lem:boundpop}, the approach used is to view  an increase in $x$ as imposing a further delay in the time to coalescence.
Such a delay can be accomplished by increasing the population size $N(t)$ during the period of the delay, with an increase to infinite population size necessary to prohibit coalescence. 

\begin{lemma} \label{lem:changepop} For $\lambda\le 0$, $x\ge 0$, population function $N$ and
scalar-valued rate function $\mu$, and $s(t_1,t_2)$, $c_x(z)$ as in the proof of Lemma \ref{lem:delaycoal}, let
$$f(\lambda, \mu,N, x)=\int_0^\infty \exp(2s(0,x+z)\lambda)c_x(z) dz.$$
If $\bar N(t)\ge N(t)$ for all $t>0$, and $\bar N(t)> N(t)$ on some subset of $[x,\infty)$ of positive measure and $\lambda<0$ then
$$f(\lambda, \mu,\bar N, x)<f(\lambda, \mu,N, x)$$ for all $x$.

If $\lambda=0$, then $f(\lambda, \mu,N, x)=1.$

\end{lemma}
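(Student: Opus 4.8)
The plan is to dispatch the case $\lambda=0$ immediately and, for $\lambda<0$, to recast $f$ via an integration by parts so that its entire dependence on the population function is isolated in the single monotone factor $p_N(z):=\exp\bigl(-\int_0^z N(x+\tau)^{-1}\,d\tau\bigr)$. When $\lambda=0$ the integrand is simply $c_x(z)$, and since $N\le B$ we have $\int_0^z N(x+\tau)^{-1}\,d\tau\ge z/B\to\infty$, so coalescence is certain and the density $c_x$ of Lemma \ref{lem:coalpdf} integrates to $1$; hence $f(0,\mu,N,x)=1$ regardless of the other arguments.

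For $\lambda<0$, I would set $g(z):=\exp(2\lambda s(0,x+z))$, noting that $g(0)=\exp(2\lambda s(0,x))$, that $0<g\le 1$ (as $\lambda\le 0$ and $s\ge 0$), that $g$ is absolutely continuous (since $s(0,\cdot)$ is), and that $g'(z)=2\lambda\,\mu(x+z)\,g(z)<0$ for almost every $z$. Since $c_x=-p_N'$ with $p_N(0)=1$ and $p_N(z)\to 0$, integrating by parts on $[0,T]$ and letting $T\to\infty$ (the boundary term at infinity vanishing because $g$ is bounded and $p_N(T)\to 0$) gives
\[
f(\lambda,\mu,N,x)=g(0)+\int_0^\infty g'(z)\,p_N(z)\,dz .
\]
The key point is that $g(0)$ and $g'$ do not depend on $N$, while $\bar N\ge N$ forces $\int_0^z \bar N(x+\tau)^{-1}\,d\tau\le \int_0^z N(x+\tau)^{-1}\,d\tau$, hence $p_{\bar N}(z)\ge p_N(z)\ge 0$ for every $z$; since $g'\le 0$, subtracting the two instances of the displayed identity yields
\[
f(\lambda,\mu,\bar N,x)-f(\lambda,\mu,N,x)=\int_0^\infty g'(z)\bigl(p_{\bar N}(z)-p_N(z)\bigr)\,dz\le 0 .
\]

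To upgrade this to a strict inequality I would use the hypothesis that $\bar N>N$ on a subset $E\subseteq[x,\infty)$ of positive measure. Choose $R$ with $|E\cap[x,R]|>0$ and put $\delta:=\int_0^{R-x}\bigl(N(x+\tau)^{-1}-\bar N(x+\tau)^{-1}\bigr)\,d\tau$; the integrand is nonnegative and strictly positive on a set of positive measure, so $\delta>0$, and therefore $p_{\bar N}(z)\ge e^{\delta}p_N(z)>p_N(z)$ for all $z\ge R-x$. Combined with $g'<0$ almost everywhere, the integrand in the last display is $\le 0$ everywhere and strictly negative on the infinite-measure tail $[R-x,\infty)$; being integrable, its integral is then strictly negative, which gives $f(\lambda,\mu,\bar N,x)<f(\lambda,\mu,N,x)$.

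I expect the only real obstacles to be technical bookkeeping: justifying the integration by parts and the vanishing of the boundary term at infinity --- which rests on the standing assumptions $N\le B$ (so that $p_N(\infty)=0$) and local integrability of $1/N$ and of $\mu$ (so that $p_N$ and $g$ are absolutely continuous) --- together with the elementary but essential fact that a nonpositive integrable function that is strictly negative on a set of positive measure has strictly negative integral. Everything else is routine calculation.
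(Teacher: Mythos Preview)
Your proposal is correct and follows essentially the same route as the paper: both arguments integrate by parts to isolate the $N$-dependence in the single factor $\exp(-L_x(z))$, then compare this factor under $N$ versus $\bar N$ against the nonpositive weight $2\lambda\mu(x+z)\exp(2\lambda s(0,x+z))$. Your write-up is a bit more explicit about the bookkeeping for the boundary term and the strict inequality, but the mathematical content is the same.
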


\begin{proof} Using Lemma \ref{lem:coalpdf}, and
integrating by parts we have
\begin{align*}f(\lambda, \mu,N, x)&=
-\exp(2s(0,x+z)\lambda)\exp(-L_x(z)) |_{z=0}^\infty\\
&\ \ \ \ \ \ \ \ \ \ \ \ \ \ \ \ \ \ \ \ \ \ \ \  +
\int_0^\infty 2\mu(x+z)\lambda \exp(2s(0,x+z)\lambda)\exp(-L_x(z)) dz
 \\
&=\exp(2s(0,x)\lambda)+
\int_0^\infty 2\mu(x+z)\lambda \exp(2s(0,x+z)\lambda)\exp(-L_x(z)) dz
\end{align*}
since the population function $N$ is bounded above so $\lim_{z\to\infty}L(z)= \infty$.

For the case $\lambda<0$, observe $\bar \ell(t)\le  \ell(t)$ so $\bar L_x(t)\le L_x(t)$ for all $t$ while $\bar L_x(t)< L_x(t)$ for all $t$ sufficiently large, so
$$f(\lambda, \mu,\bar N, x)<f(\lambda, \mu,N, x).$$

The claim for $\lambda=0$ is immediate.
\end{proof}

\begin{lemma} \label{lem:boundpop} 
If $\lambda<0$, the function $f(\lambda,\mu,N,x)$ of Lemma \ref{lem:changepop} is a strictly decreasing function of $x$.
\end{lemma}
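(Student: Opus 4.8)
The plan is to realize an increase in $x$ as a degenerate instance of the population-size increase already analyzed in Lemma \ref{lem:changepop}. Fix $0\le x<x'$ and let $\bar N$ be the population function that agrees with $N$ outside $[x,x')$ but is ``infinite'' on $[x,x')$; equivalently $\bar\ell:=1/\bar N$ equals $\ell=1/N$ off $[x,x')$ and is identically $0$ there. The intuition is that an infinite population on $[x,x')$ forbids any coalescence during that interval, so two lineages entering at time $x$ under $\bar N$ behave exactly like two lineages entering at time $x'$ under $N$.

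First I would make that intuition precise by a change of variables in the definition of $f$. One computes $\bar L_x(z)=0$ for $z\le x'-x$ and $\bar L_x(z)=\int_{x'}^{x+z}\ell(u)\,du$ for $z>x'-x$, so that $\bar c_x(z)=0$ on $[0,x'-x)$ and $\bar c_x(z)=\ell(x+z)\exp(-\bar L_x(z))$ beyond; substituting $z\mapsto z-(x'-x)$ in $f(\lambda,\mu,\bar N,x)=\int_0^\infty \exp(2s(0,x+z)\lambda)\bar c_x(z)\,dz$ then yields, after cancellation,
\[
f(\lambda,\mu,\bar N,x)=f(\lambda,\mu,N,x').
\]
Next I would observe that $\bar N$, although not literally a population function in the sense of Section \ref{sec:GTR-and-LD} (it is infinite-valued on a bounded interval), has the only two features of $N$ actually used in the proof of Lemma \ref{lem:changepop}: $1/\bar N$ is integrable on bounded intervals (it vanishes on $[x,x')$ and equals $1/N$ elsewhere), and $\bar L_x(z)\to\infty$ as $z\to\infty$ (since $N\le B$ on $[x',\infty)$ forces $\bar L_x(z)\ge(x+z-x')/B$). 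Hence the argument of Lemma \ref{lem:changepop} applies verbatim: because $\bar N\ge N$ everywhere and $\bar N>N$ on $[x,x')$, a subset of $[x,\infty)$ of positive measure, and $\lambda<0$, we conclude $f(\lambda,\mu,\bar N,x)<f(\lambda,\mu,N,x)$, that is $f(\lambda,\mu,N,x')<f(\lambda,\mu,N,x)$, which is the claim.

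I expect the only genuine obstacle to be legitimizing the degenerate function $\bar N$ while keeping the inequality \emph{strict}. One can instead approximate $\bar N$ by honest population functions $\bar N_R$ (equal to $N$ off $[x,x')$ and to the constant $R$ on $[x,x')$) and let $R\to\infty$: Lemma \ref{lem:changepop} gives $f(\lambda,\mu,\bar N_R,x)<f(\lambda,\mu,N,x)$ for each $R$, and a routine limiting argument (weak convergence of the coalescence time as $R\to\infty$, together with the boundedness of $z\mapsto\exp(2s(0,x+z)\lambda)$) gives $f(\lambda,\mu,\bar N_R,x)\to f(\lambda,\mu,N,x')$ --- but passing to the limit only yields $\le$. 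Recovering strictness is exactly why I would prefer the ``verbatim extension of Lemma \ref{lem:changepop} to $\bar N$'' remark above. As an alternative proof altogether, one may differentiate the integration-by-parts expression for $f$ from the proof of Lemma \ref{lem:changepop} to get $\partial_x f(\lambda,\mu,N,x)=\ell(x)\bigl(f(\lambda,\mu,N,x)-\exp(2s(0,x)\lambda)\bigr)$, and that same expression exhibits $f(\lambda,\mu,N,x)-\exp(2s(0,x)\lambda)$ as strictly negative, since the integrand $2\mu(x+z)\lambda\exp(2s(0,x+z)\lambda)\exp(-L_x(z))$ is negative everywhere (and integrable, as $\mu$ is); the caveat there is justifying differentiation under the integral when $\ell=1/N$ is only assumed integrable rather than continuous.
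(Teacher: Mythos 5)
Your argument is correct, and it rests on the same guiding idea the paper announces just before Lemma \ref{lem:changepop} --- that an increase in $x$ is a delay of coalescence achievable by inflating the population, ``with an increase to infinite population size necessary to prohibit coalescence'' --- but you execute it differently from the paper's proof. You work directly with the degenerate function $\bar N$ that is infinite on $[x,x')$, prove the exact identity $f(\lambda,\mu,\bar N,x)=f(\lambda,\mu,N,x')$ by a change of variables (which checks out), and then rerun the proof of Lemma \ref{lem:changepop} for $\bar N$ after verifying that the only properties of a population function it uses are integrability of $1/\bar N$ on bounded intervals and $\bar L_x(z)\to\infty$; strictness is automatic because $1/\bar N=0<1/B\le 1/N$ on $[x,x')$ forces $\bar L_x(z)<L_x(z)$ for every $z>0$ while the factor $2\mu(x+z)\lambda\exp(2s(0,x+z)\lambda)$ in the integrand is strictly negative. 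The paper instead never leaves its stated class of bounded population functions: it sets $\bar N^u$ equal to a finite constant $u$ on $[0,y)$, derives the quantitative bound $f(\lambda,\mu,\bar N^u,x)>\exp((x-y)/u)\,f(\lambda,\mu,N,y)$, and secures strictness by fixing $u$ (so the gap $f(\lambda,\mu,N,x)>f(\lambda,\mu,\bar N^u,x)$ from Lemma \ref{lem:changepop} is held fixed) while a second parameter $v\to\infty$ gives $f(\lambda,\mu,\bar N^u,x)\ge f(\lambda,\mu,N,y)$ --- precisely the maneuver you anticipated would be needed, having correctly observed that a naive limit of your $\bar N_R$ approximations yields only $\le$. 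Your route buys a shorter proof with an exact equality and no limiting step, at the cost of extending Lemma \ref{lem:changepop} beyond the model's population-function class (a step you justify adequately); the paper's route stays inside that class at the cost of the more delicate two-parameter limit. Your ODE-style alternative, $\partial_x f=\ell(x)\bigl(f-\exp(2s(0,x)\lambda)\bigr)$ with $f-\exp(2s(0,x)\lambda)<0$ read off from the integration-by-parts formula, is also sound; the differentiation caveat you raise can be sidestepped by substituting $w=x+z$ to write $f(x)=\exp(L(x))\int_x^\infty \exp(2s(0,w)\lambda)\ell(w)\exp(-L(w))\,dw$ and applying the product rule to two absolutely continuous factors, then recovering strict decrease from the integrated form of that identity.
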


\begin{proof} Let $0\le x<y$. For any $\dsp u>\max_{t\in [0,y]} N(t)$ define $\bar N^u$ by  
$$\bar N^u(t)=\begin{cases} u&\text{ if $0\le t<y$,}\\N(t)& \text{ if $y\le t$.}\end{cases}$$ 
 
Then
\begin{align*}
f(\lambda,\mu, \bar N^u,x)&=
\int_0^\infty \exp(2s(0,x+z)\lambda) \bar \ell^u_x(z) \exp \left( -\bar L^u_x(z) \right ) dz\\
&>\int_{y-x}^\infty \exp(2s(0,x+z)\lambda) \bar \ell^u_x(z) \exp \left( -\bar L^u_x(z) \right ) dz\\
&=\int_{y-x}^\infty \exp(2s(0,x+z)\lambda) \frac 1{\bar N^u(x+z)} \exp \left( -\int_0^z \frac 1{\bar N^u(x+t)}dt \right ) dz\\
&=\int_{0}^\infty \exp(2s(0,y+z)\lambda) \frac 1{\bar N^u(y+z)} \exp \left( -\int_0^{z+y-x} \frac 1{\bar N^u(x+t)}dt \right ) dz\\
&=\int_{0}^\infty \exp(2s(0,y+z)\lambda) \ell(y+z) \exp \left( -\int_x^{z+y} \frac 1{\bar N^u(t)}dt \right ) dz.\\
\end{align*}
But 
\begin{align*}
\int_x^{y+z} \frac 1{\bar N^u(t)} dt&=\int_x^y \frac 1u dt +\int_y^{y+z} \frac 1{N(t) }dt,\\
&=\frac {y-x}u+\int_y^{y+z} \frac 1{N(t) }dt.
\end{align*}
Thus
\begin{align*}
f(\lambda,\mu, &\bar N^u,x)\\
&> \exp\left (\frac{x-y}{u} \right)\int_0^\infty \exp(2s(0,y+z)\lambda) \ell(y+z) \exp \left( -\int_y^{y+z} \frac 1{N(t)}dt \right )  dz\\
&= \exp((x-y)/u) f(\lambda,  \mu, N, y).
\end{align*}
Thus for any $v>u$, using Lemma \ref{lem:changepop}, we have
$$
f(\lambda,  \mu, N, x)
>f(\lambda,  \mu, \bar N^u, x)> f(\lambda,  \mu, \bar N^v, x)>\exp((x-y)/v) f(\lambda,  \mu, N, y).$$
 Letting $v\to \infty$ gives
$$ f(\lambda,  \mu, N, x)
>f(\lambda,  \mu, \bar N^u, x)\ge  f(\lambda,  \mu, N, y)$$
establishing the claim.
\end{proof}

\section {Main result for ultrametric species trees}\label{sec:main}

We now apply the results of Sections \ref{sec:alg} and \ref{sec:Fab} to the model $\mathcal M$ of Section \ref{subsec:modelM}. 

\begin{thm}\label{thm:main}
Under the mixture of coalescent mixtures model $\mathcal M$  on an ultrametric species tree
measured in numbers of generations, the rooted species tree topology is identifiable from the expected 
log-det distances for pairs of taxa.
\end{thm}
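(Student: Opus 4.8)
The plan is to show that the array of expected log-det distances $d_{LD}(a,b)$ under $\mathcal M$ is an ultrametric on $X$ that is strictly compatible with $\sigma$; by the 3-point condition characterization of equidistant trees \cite{SempleSteel}, the unique such tree is $\sigma$, so its rooted topology is identifiable.

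First I would reduce $d_{LD}$ to a single determinant. For a pair $a,b$ let $x_{ab}$ be the depth in generations of $\MRCA(a,b)$ in $\sigma$ (well defined by ultrametricity), and let $N^{ab}$ be the population function along the ancestral lineage above $\MRCA(a,b)$, its values below $x_{ab}$ being irrelevant. Because $\sigma$ is ultrametric in generations, both pendant paths from $a$ and from $b$ to $\MRCA(a,b)$ have length $x_{ab}$, and the rate $\mu_i(t)Q_i$ is a function of absolute time only; hence Lemma \ref{lem:delaycoal} describes each class frequency matrix, $F_{ab}=\sum_{i=1}^m w_i\,C_i(Q_i,\mu_i,N^{ab},x_{ab})$, a positive combination of symmetric positive definite matrices each depending only on $N^{ab}|_{[x_{ab},\infty)}$. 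Each $C_i(\cdots)$ is $\diag(\pi_i)$ times a stochastic matrix, so its row sums are $\pi_i$; being symmetric, its column sums are $\pi_i$ too. Thus both marginals of $F_{ab}$ equal $\bar\pi:=\sum_i w_i\pi_i$, the same for every pair, so $g_ag_b=\bigl(\prod_j\bar\pi_j\bigr)^2$ is constant, and since $F_{ab}$ is positive definite, $d_{LD}(a,b)=-\tfrac1k\ln\det F_{ab}+\mathrm{const}$; in particular $d_{LD}(a,b)<d_{LD}(c,e)\iff\det F_{ab}>\det F_{ce}$.

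Next I would verify the 3-point condition along each rooted triple of $\sigma$. Fix taxa $a,b,c$ with induced rooted triple $ab|c$, put $u=\MRCA(a,b)$ and $v=\MRCA(a,b,c)=\MRCA(a,c)=\MRCA(b,c)$, so $v$ lies on the path from $u$ to the root and $x_{ab}\le x_{ac}=x_{bc}=:x_v$, strictly if the triple's internal edge has positive length. For the two ``distant'' pairs the frequency matrices are literally equal: in each, the two pendant paths have length $x_v$ with identical time-dependent rates and coalescence proceeds in the same ancestral population above $v$, so $F_{ac}=F_{bc}=\sum_i w_i C_i(Q_i,\mu_i,N^v,x_v)$, whence $d_{LD}(a,c)=d_{LD}(b,c)$. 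For the ``close'' pair, $N^{ab}|_{[x_v,\infty)}=N^v$ and $C_i(Q_i,\mu_i,\cdot,x_v)$ depends only on the population function on $[x_v,\infty)$, so I may write $F_{ab}=\sum_i w_i C_i(Q_i,\mu_i,N^{ab},x_{ab})$ and $F_{ac}=\sum_i w_i C_i(Q_i,\mu_i,N^{ab},x_v)$ with the \emph{same} $N^{ab}$ and $x_{ab}\le x_v$. Lemma \ref{lem:delaycoal} then gives $y^TC_i(Q_i,\mu_i,N^{ab},x_{ab})y\ge y^TC_i(Q_i,\mu_i,N^{ab},x_v)y$ for all $y$ and $i$, strict for suitable $y$ whenever $x_{ab}<x_v$ and $Q_i\ne0$ (take $y$ aligned with a negative eigenvalue of $Q_i$ and invoke the strict monotonicity of Lemma \ref{lem:boundpop}). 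Assuming at least one positively weighted class has $Q_i\ne 0$, Lemma \ref{lem:main} yields $\det F_{ab}>\det F_{ac}$, i.e. $d_{LD}(a,b)<d_{LD}(a,c)=d_{LD}(b,c)$. Thus $d_{LD}$ satisfies the strict 3-point condition relative to $\sigma$, so (internal edges of $\sigma$ having positive length) the equidistant tree it determines is $\sigma$, which can be recovered by UPGMA-type clustering and is therefore identifiable.

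The one genuinely non-trivial step is passing from classwise domination of the quadratic forms to strict domination of the determinants of the mixtures, and that is exactly what Lemma \ref{lem:main} supplies. Everything else is bookkeeping: that the marginal base frequencies are constant, so $d_{LD}(a,b)$ is a decreasing function of $\det F_{ab}$, and that one can use a common population function for the two pairs being compared --- automatic for the two ``distant'' pairs, whose frequency matrices coincide, and for the ``close'' pair via the shared population history above the deeper node. The main place the hypotheses are used is ultrametricity in generations, which makes both pendant paths of a pair equally long and makes the coalescent contribution above a node a function of that node alone.
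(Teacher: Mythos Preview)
Your proposal is correct and follows essentially the same route as the paper: reduce to rooted triples, observe that all marginals equal $\bar\pi=\sum_i w_i\pi_i$ so the comparison of $d_{LD}$ values becomes a comparison of determinants, note $F_{ac}=F_{bc}$ by exchangeability, and then apply Lemma~\ref{lem:delaycoal} classwise to get the quadratic-form domination needed for Lemma~\ref{lem:main} to yield $\det F_{ab}>\det F_{ac}$. Your write-up is in fact a bit more explicit than the paper's on two points the paper leaves implicit --- that one may use a common population function $N^{ab}$ for the two comparisons (since it agrees with $N^v$ on $[x_v,\infty)$), and that strictness requires at least one positively weighted class with $Q_i\ne 0$ and a nondegenerate internal edge --- but these are refinements of the same argument, not a different one.
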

\begin{proof} 
Since a rooted tree is identifiable from its collection of induced rooted triples \cite{SempleSteel}, it will suffice to prove
the result for 3-taxon trees. Without loss of generality, assume $\sigma = ((a,b),c)$. We need only show that
$$d_{LD}(a,b)<d_{LD}(a,c)=d_{LD}(b,c).$$
It is immediate that $d_{LD}(a,c)=d_{LD}(b,c)$, since the model exhibits exchangeability of $a$ and $b$.

Note that under this model, the frequency of bases at any taxon will be $\sum_i w_i\pi_i$ where $\pi_i$ is the base frequency vector for class $i$. In particular
in the formula $$d_{LD}(u,v) =-\frac 14 \left ( \ln |\det(F_{uv})|-\frac 12 \ln (g_ug_v)\right )$$
the last term is identical for all pairs of taxa $u,v$. 

We must show, then, that $\det F_{ab}$, $\det F_{ac} > 0$ and

$$\det F_{ab}>\det F_{ac}.$$

In light of Lemma \ref{lem:main}, it is enough to show that  for a single class $i$, with parameters $\{N_e\}$, $Q_i$, $\mu_i$ that the pairwise frequency arrays $F_i$ and $G_i$ are positive definite, and $y^TF_iy\ge y^TG_iy$ for all $y$, with strict inequality for some $y$. But this is the content of Lemma \ref{lem:delaycoal}.  
\end{proof}

\section{Symmetric rate variation on ultrametric trees}\label{sec:ratesym}
In this section we extend the identifiability result of the previous section, by relaxing the assumption 
that in each mixture component a single rate function $\mu_i(t)$ applies to all simultaneously existing
populations. Instead, for each class $i$ we allow rate functions $\mu^i_e(t)$ to be specified for each edge $e$ of the tree. 
We do, however, require a certain symmetry to the rate functions across classes, so that  in the full model, 
there is a `balancing' of the rates across mixture components. 

\begin{figure}[h]
 \centering
\includegraphics[height=2.in]{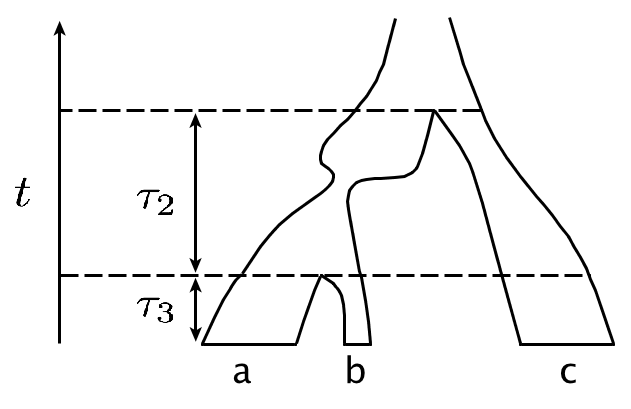}
\caption{The ultrametric species tree $((a\tc\tau_3,b\tc\tau_3)\tc\tau_2, \, c\tc\tau_3+\tau_2)$ with population functions $N_e(t)$ 
depicted using widths of pipes. The dashed lines at speciation times divide the $t$ axis into epochs 3, 2, and 1, 
with $i$ populations coexisting in epoch $i$ of duration $\tau_i$.}\label{fig:symst}
\end{figure}

To be precise, consider an ultrametric 3-taxon species tree with topology $((a,b),c)$ and population functions 
$N_e(t)$ for each edge $e$, as shown in Figure \ref{fig:symst}. With time $t$ measured in generations before the present,
speciation events partition the time axis into three epochs, 
indexed by the count of populations in the tree during the period. In Figure \ref{fig:symst} epoch 3 is $[0,\tau_3)$, 
epoch 2 is $[\tau_3,\tau_3+\tau_2)$, and epoch 1 is $[\tau_3+\tau_2,\infty)$, with epoch $j$ of duration $\tau_j$. 
It is convenient to think of the two epochs in the pendant branch leading to $c$ as corresponding to actual
edges in the tree, so that the species tree has 6 edges or populations in all, including the one above the root.

If scalar-valued rate functions are independently chosen for each of these 6 edges, by permuting 
the three rate functions for epoch 3, and the two for epoch 2, there are twelve assignments of these functions to edges in the tree. 
For fixed GTR model parameters $\pi$, $Q$, we then consider a 12-class equal-weight mixture where gene trees 
for each class are determined  by the multispecies coalescent model, and sequence evolution is modeled using 
$\pi$, $Q$, and one of the 12 equally-likely assignments of rate functions to edges. 
Thus for each class we do not
have a rate function dependent only on $t$ as in the previous section, but in the 12-class mixture 
the mutation process is `on average' the same at all times across populations. 
For instance, in one class, during epoch 3, the substitution process leading to taxon $a$ may be faster than in the population
leading to $b$, but in another class this will be reversed.

\smallskip 
We state a preliminary lemma on pairwise site-pattern frequency arrays for this model.

\begin{lemma}\label{lem:symrate}
Consider the ultrametric species tree $((a\tc\tau_3,b\tc\tau_3)\tc\tau_2,\, c\tc\tau_3\tp\tau_2)$ with branch lengths in generations
and  population functions $\{N_e(t)\}$. Fix  GTR parameters $\pi$, $Q$ and six scalar-valued rate functions  
\begin{align*}
\mu^3_j(t), &\quad t\in [0,\tau_3), \hskip 1.2cm j=1,2,3,\\
\mu^2_i(t),&\quad t\in [\tau_3,\tau_3+\tau_2), \hskip .7cm i=1,2,\\
\mu^1(t),&\quad t\in [\tau_2+\tau_3,\infty).
\end{align*}

Consider a coalescent+base-substitution model  where gene trees arise from the MSC on the species tree with population functions, and base-substitutons occur on these gene trees according to a 12-class uniform mixture  where each class uses the rate function $\mu^1$ above-the-root, a permutation of the rate functions $\mu^2_i$ for the edges in epoch 2, and a permutation of the $\mu^3_j$ for the edges in epoch 3. 

Then the pairwise pattern frequency arrays $F_{ab}$ and $F_{ac}$ arising from this model satisfy
\begin{enumerate}
\item $F_{ab}$, $F_{ac}$ are positive definite,
\item $y^TF_{ab}y \ge y^TF_{ac}y$ for any $y\in \mathbb R^k$, with the inequality strict for some $y$.
\end{enumerate}
\end{lemma}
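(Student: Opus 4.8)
The plan is to compute the two pairwise frequency arrays $F_{ab}$ and $F_{ac}$ explicitly as averages over the twelve equally-likely rate assignments, and then to reduce claims (1) and (2) to a single-population comparison of the kind already handled in Lemma~\ref{lem:delaycoal}. First I would diagonalize via $S=\diag(\pi)^{-1/2}U$ as in Lemma~\ref{lem:Markov}, so that every Markov matrix appearing in the construction is simultaneously diagonalized and the whole computation passes to the scalar level: for each eigenvalue $\lambda$ of $Q$ it suffices to track the corresponding diagonal entry. Because $U^T\diag(\pi)^{-1/2}F_{uv}\diag(\pi)^{-1/2}U$ is diagonal with nonnegative entries (each entry being an average of products of exponentials of the form $\exp(2s(\cdot)\lambda)$ weighted by a coalescent density, hence positive), positive definiteness in (1) is immediate, and (2) reduces to showing, eigenvalue by eigenvalue with $\lambda<0$, that the scalar quantity for the pair $(a,b)$ is at least that for $(a,c)$, strictly for some $\lambda$ (e.g.\ by Lemma~\ref{lem:diagQ}, $Q\ne 0$ guarantees a negative eigenvalue, and $F$ differs from $G$ there).

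The heart of the matter is the scalar comparison. For the pair $a,b$ the two lineages can coalesce anywhere from the bottom of epoch~3 upward: with probability determined by $N_e$ on the $a$- and $b$-edges they may coalesce within epoch~3, and otherwise they enter epoch~2 (merged into the $ab$-ancestor population) and possibly coalesce there, etc. For the pair $a,c$ the two lineages cannot coalesce until epoch~2 at the earliest. In both cases the observed pattern contribution for eigenvalue $\lambda$ is an expectation over the total coalescent time $T$ of a term $\exp(2 \int_0^{T}\mu_{\mathrm{path}}(t)\,dt\,\lambda)$, where $\mu_{\mathrm{path}}$ is the rate function encountered along the lineage at time $t$ before present, but which of the $\mu^3_j$ (resp.\ $\mu^2_i$) appears depends on the class. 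Averaging over the twelve classes is where the symmetry enters: in epoch~3 the pair $a,b$ sees, across the twelve classes, every ordered pair of distinct rate functions from $\{\mu^3_1,\mu^3_2,\mu^3_3\}$ (one on the $a$-edge, one on the $b$-edge) equally often, while the pair $a,c$ sees every ordered pair from the same set (one on the $a$-edge, one on the $c$-edge) equally often; and similarly in epoch~2 for the $ab$-ancestor-versus-$c$ edges. Thus the class-averaged integrand in the two cases is built from exactly the same symmetric combination $\frac{1}{3}\sum_j \exp(2\int \mu^3_j)$-type factors for the portions of the lineages' histories lying in epoch~3, and the same $\frac12\sum_i$ factors in epoch~2, \emph{provided} both lineages have already passed through those epochs. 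I would make this precise by writing, for a fixed coalescent scenario, the class-average of $\exp(\cdots\lambda)$ as a product over epochs of a factor that is the same function for $ab$ and for $ac$ once one conditions on "not yet coalesced at the top of that epoch," because the average over the epoch's rate-permutations symmetrizes away the edge labels.

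Having arranged matters so that, after class-averaging, $F_{ab}$ and $F_{ac}$ are given by the \emph{same} functional of the coalescent-time distribution but evaluated against different coalescent densities --- the $a,c$ density being the $a,b$ density "pushed later" because coalescence is forbidden before epoch~2 --- the monotonicity machinery of Section~\ref{sec:Fab} finishes the argument: delaying the earliest possible coalescence is exactly the operation analyzed in Lemmas~\ref{lem:changepop} and~\ref{lem:boundpop} (raise the relevant population sizes to $\infty$ on epoch~3), and it strictly decreases $f(\lambda,\cdot,\cdot,\cdot)$ for $\lambda<0$. More concretely, I would factor $F_{ac}$ as $\diag(\pi)$ times a product of Markov-type matrices for epoch~3 (with the symmetrized rate, and \emph{no} coalescence) composed with the same "extra substitutions due to delayed coalescence" operator that Lemma~\ref{lem:delaycoal} (equation~\eqref{eq:extra}) attaches to $F_{ab}$ starting from epoch~2 --- and then invoke Lemma~\ref{lem:delaycoal}'s conclusion that $y^TC(x)y$ is decreasing in $x$, comparing $x=0$ (for $ab$, coalescence allowed from the bottom) against $x=\tau_3$ (for $ac$). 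The main obstacle I anticipate is purely bookkeeping: setting up notation for "which rate function is on which edge in which class" cleanly enough that the symmetrization identity is visibly the statement that $\frac{1}{|{\rm perms}|}\sum$ over permutations factors through the unordered edge set, so that the $ab$ and $ac$ computations literally share a common symmetric building block. Once that identity is stated correctly, positive definiteness and the strict inequality both follow from the already-established scalar monotonicity together with $Q\ne 0$.
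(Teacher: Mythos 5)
There is a genuine gap, and it sits exactly where the symmetry hypothesis has to do real work. Your central claim --- that after averaging over the twelve classes the $ab$ and $ac$ computations share ``exactly the same symmetric combination'' of factors epoch by epoch, so that the only remaining difference is a coalescent delay handled by Lemmas \ref{lem:changepop} and \ref{lem:boundpop} --- is false in epoch 2. For the pair $(a,b)$, both lineages sit in the \emph{same} population (the $ab$-ancestral edge) during epoch 2, so within each class they experience the \emph{same} rate function; averaging over the permutations of $\{\mu^2_1,\mu^2_2\}$ therefore produces a factor of the form $\tfrac12\bigl(M_C(N_v,\tilde\mu^2_1)+M_C(N_v,\tilde\mu^2_2)\bigr)$, an average of ``same rate on both lineages'' terms. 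For the pair $(a,c)$, the two lineages sit in \emph{different} populations during epoch 2 and so carry the two \emph{different} rate functions; since the Markov matrices commute, every class contributes the same product $M(\mu^2_1)M(\mu^2_2)$. These two epoch-2 objects are not equal, and no amount of permutation-averaging makes them equal, so the comparison does not reduce to the pure delay monotonicity of Lemma \ref{lem:delaycoal}. The missing ingredient is an eigenvalue-wise power-mean inequality: after the common epoch-3 factor is cancelled (which your outline does correctly in spirit), one must compare eigenvalues of $M(\mu^2_1)^2+M(\mu^2_2)^2$ with those of $2M(\mu^2_1)M(\mu^2_2)$ via the scalar inequality $\alpha^2+\beta^2\ge 2\alpha\beta$, and only then invoke Lemma \ref{lem:delaycoal} (to compare $M_C(N_v,\tilde\mu^2_i)$ with $M(\mu^2_i)^2M_C(N_r,\tilde\mu^1)$, which also supplies the strictness). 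This is precisely the route the paper takes in equations \eqref{eq:Fab}--\eqref{eq:Fac}; without the $\alpha^2+\beta^2\ge2\alpha\beta$ step your argument cannot close, because the ``same functional, delayed coalescent density'' picture is simply not what the class-averaged arrays look like.

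Two smaller factual slips in your setup also need correcting, since they suggest the coalescent bookkeeping is off by one epoch: the lineages of $a$ and $b$ cannot coalesce in epoch 3 at all (they are in distinct pendant populations until time $\tau_3$), and the lineages of $a$ and $c$ cannot coalesce until epoch 1, above the root at time $\tau_3+\tau_2$, not ``epoch 2 at the earliest.'' Your argument for claim (1), by contrast, is essentially fine and matches the paper: each class term is a pattern frequency array of a single-class GTR coalescent mixture, hence symmetric positive definite by Lemma \ref{lem:Markov}, and a convex combination of such matrices is positive definite.
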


\begin{proof} With $S_n$ denoting the symmetric group on $\{1,2,\dots, n\}$,
for $\sigma\in S_3$, $\rho\in S_2,$ let $F_{ab}(\sigma, \rho)$ be the pattern frequency array 
from the coalescent+base-substitution model where edges of the species tree are assigned rate functions
as follows:  $\mu^3_{\sigma(1)}$, $\mu^3_{\sigma(2)}$, $\mu^3_{\sigma(3)}$ are assigned to the edges in epoch 3 
from left to right in Figure \ref{fig:symst}, 
$\mu^2_{\rho(1)}$, $\rho^2_{\rho(2)}$ are assigned to edges in epoch 2 from left to right in
Figure \ref{fig:symst}, and $\mu^1$ is assigned to the edge above the root. 
Then 
$$F_{ab}=\sum_{\sigma\in S_3}\sum_{\rho\in S_2} F_{ab}(\sigma, \rho),
\ \ \ \ 
F_{ac}=\sum_{\sigma\in S_3}\sum_{\rho\in S_2} F_{ac}(\sigma, \rho).$$

Let $v=\MRCA(a,b)$,  $r=\MRCA(a,c)$,
$N_v(t)$ be the population function defined piecewise using $N_e(t)$ for edges $e$ above $v$, and $c_v$ the associated 
density of coalescent times.
To compute $F_{ab}(\sigma, \rho)$,  for $i=1,2$
define  rate functions 
$$\tilde \mu^2_i(t)=\begin{cases} \mu^2_{i}(t + \tau_3) &\text{ if } 0 \le t < \tau_2,\\ 
\mu^1(t + \tau_3) &\text{ if } t \ge \tau_2.\end{cases}$$
Then 
\begin{equation*}
F_{ab}(\sigma, \rho)=\int_0^\infty \diag(\pi) M(\mu^3_{\sigma(1)},Q,\tau_3)(M(\tilde \mu^2_{\rho(1)},Q,z))^2 M(\mu^3_{\sigma(2)},Q,\tau_3)c_{v}(z) \, dz.
\end{equation*}
Since the matrix product in the integrand is a pattern frequency array from a GTR coalescent mixture model, by Lemma \ref{lem:Markov} 
it is symmetric positive definite.
As a convex combination of such matrices, $F_{ab}(\sigma,\rho)$ is as well. Claim (1) for $F_{ab}$ follows, 
with a similar argument applying to $F_{ac}$.

To simplify notation, let $M(\mu^j_i)=M(\mu^j_i,Q, \tau_j)$  
denote the Markov matrix describing the base substitution process over 
an edge in epoch $j$ with rate function $\mu^j_i$ and rate matrix $Q$, and 
$M_C(N,\mu)=M_C(N, \mu,Q)$ denote the Markov matrix encoding substitutions arising from the 
coalescent+base substitution model for two lineages entering a population with population function 
$N$, rate function $\mu$, and rate matrix $Q$.  

Since the $M(\mu^j_i)$ and $M_C(N_v,\tilde \mu^2_i)$ are simultaneously diagonalizable, they commute, so

\begin{align}
\diag(\pi)^{-1}F_{ab}&=  \frac1{12} \sum_{\sigma\in S_3}\sum_{\rho\in S_2}  M(\mu^3_{\sigma(1)}) M_C(N_v,\tilde \mu^2_{\rho(1)}) M(\mu^3_{\sigma(2)}) \notag\\
&= \frac2{12} \left (M(\mu^3_1)M(\mu^3_2)+M(\mu^3_1)M(\mu^3_3)+M(\mu^3_2)M(\mu^3_3)\right) \label{eq:Fab}\\
&\ \hskip 2.in \  \times \left ( M_C(N_v,\tilde \mu^2_1)+M_C(N_v,\tilde \mu^2_2)\right)\notag.
\end{align}

Similarly, with $\tilde \mu^1(t)=\mu^1(t+\tau_2+\tau_3)$,
\begin{align}
\diag(\pi)^{-1}F_{ac}&=\frac1{12} \sum_{\sigma\in S_3}\sum_{\rho\in S_2} M(\mu^3_{\sigma(1)}) M(\mu^2_{\rho(1)})M_C(N_r, \tilde \mu^1)M(\mu^2_{\rho(2)}) M(\mu^3_{\sigma(3)})\notag \\
&= \frac4{12} \left (M(\mu^3_1)M(\mu^3_2)+M(\mu^3_1)M(\mu^3_3)+M(\mu^3_2)M(\mu^3_3)\right)\label{eq:Fac} \\
&\ \hskip 2.in \  M(\mu^2_1)M(\mu^2_2)  M_C(N_r, \tilde \mu^1).\notag
\end{align}

Now Lemma \ref{lem:Markov}  implies all matrices $M(\cdot)$ and $M_C(\cdot)$ in Equations \eqref{eq:Fab} and \eqref{eq:Fac} are simultaneously
diagonalizable by $S=\diag(\pi)^{-1/2}U$ for some orthogonal $U$. Thus there are diagonalizations 
$$\diag(\pi)^{-1}F_{ab}=S\Lambda_{ab}S^{-1}, \ \ \ \ \diag(\pi)^{-1}F_{ac}=S\Lambda_{ac}S^{-1},$$
so with $R=\diag(\pi)^{1/2}U$,
$$F_{ab}=R\Lambda_{ab}R^T,\ \ \ F_{ac}=R\Lambda_{ac}R^T.$$
To establish claim (2), it is therefore enough to show that  eigenvalues of $\diag(\pi)^{-1}F_{ab}$ are greater than or equal to the corresponding 
ones for $\diag(\pi)^{-1}F_{ac}$, with at least one strictly greater.
 
To facilitate our argument, we remove the common factor in Equations \eqref{eq:Fab} and \eqref{eq:Fac} corresponding to epoch 3, as well as a scalar $2/12$,
since this factor is diagonalizable by $S$ and has positive eigenvalues.
It is thus sufficient to show that for
$$\tilde F_{ab}=  M_C(N_v,\tilde \mu^2_1)+M_C(N_v,\tilde \mu^2_2) $$ and
$$\tilde F_{ac}= 2M(\mu^2_1)M(\mu^2_2)  M_C(N_r, \mu^1),$$
eigenvalues of $\tilde F_{ab}$ are larger than those of $\tilde F_{ac}$, with at least one strictly larger.

By Lemma \ref{lem:delaycoal}, for any $y\in \mathbb R^k$, $i=1,2$,
$$y^T\diag(\pi) M_C(N_v,\tilde \mu^2_i)y\ge y^T\diag(\pi) M(\mu^2_i)^2M_C(N_r,\mu^1)y,$$ with a strict inequality for some $y$.
But again using the relationship between the  diagonalizations of the Markov matrices and of the quadratic forms, this implies that the eigenvalues of $M_C(N_v,\tilde \mu^2_i)$ are larger than those of $M(\mu^2_i)^2M_C(N_r,\mu^1)$ with at least one strictly larger.

Summing over $i=1,2$
gives the same relationship between eigenvalues of $\tilde F_{ab}$ and $\left (M(\mu_1^2)^2+M(\mu_2^2)^2\right )M_C(N_r,\mu^1)$.
Using the scalar inequality $\alpha^2+\beta^2\ge 2\alpha \beta$ on corresponding eigenvalues of $M(\mu_1^2)$ and $M(\mu_2^2)$ gives the desired relationships on eigenvalues of $\tilde F_{ab}$ and $\tilde F_{ac}$.
\end{proof}

\smallskip

 We now define a more general coalescent+base substitution mixture model on an $n$-taxon ultrametric tree denoted by $\mathcal M^{sym}$: 
For each site, a gene tree is determined by the MSC on an ultrametric species tree with populations functions $\{N_e\}$. With edges 
subdivided into epochs determined by speciation times, suppose for each class $i$ with weight $w_i$ there are GTR parameters 
$\pi_i,Q_i$, and scalar-valued rate functions $\mu_i(e,t)$ for edges $e$ within epochs. Assume in addition the following symmetry 
condition:  
for any class $i$ and any permutation of the $\mu_i(e,t)$ within epochs, there is another class $j$ with the same 
$w_i,\pi_i,Q_i$ but the permuted rate functions.

A model in $M^{sym}$ on a large tree restricts to a model in $\mathcal M^{sym}$ on any induced 3-taxon subtree. This 
3-taxon model is easily seen to be a mixture of models of the type in Lemma \ref{lem:symrate}. Repeating the 
argument of Theorem \ref{thm:main}, but using Lemma \ref{lem:symrate} in place of Lemma \ref{lem:delaycoal}, 
yields the following.

\begin{thm}\label{thm:sym}
Consider a model in $\mathcal M^{sym}$ on an ultrametric tree. Then the rooted species tree topology is identifiable from the 
expected log-det distances between taxa.
\end{thm}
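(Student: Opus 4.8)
The strategy is to reduce the $n$-taxon identifiability claim to the single 3-taxon case already settled by Lemma~\ref{lem:symrate}, exactly mimicking the reduction used in Theorem~\ref{thm:main}. First I would invoke the standard fact \cite{SempleSteel} that a rooted tree is determined by the collection of rooted triples it induces; hence it suffices to show that for any three taxa $a,b,c$ with induced topology $((a,b),c)$ on the species tree $\sigma$, the expected log-det distances satisfy
\[
d_{LD}(a,b) < d_{LD}(a,c) = d_{LD}(b,c).
\]
The equality $d_{LD}(a,c)=d_{LD}(b,c)$ is immediate from exchangeability of $a$ and $b$ in the restricted model, and, just as in Theorem~\ref{thm:main}, the base-frequency correction term $\tfrac12\ln(g_ug_v)$ is the same for all pairs since every taxon has marginal base distribution $\sum_i w_i\pi_i$. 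So everything comes down to proving the pattern-frequency matrices $F_{ab},F_{ac}$ are positive definite and $\det F_{ab} > \det F_{ac}$.

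Next I would identify the structure of the 3-taxon restriction of an $\mathcal M^{sym}$ model. Restricting the MSC on $\sigma$ to the three lineages $a,b,c$ yields the MSC on the 3-taxon ultrametric tree $((a\tc\tau_3,b\tc\tau_3)\tc\tau_2,\,c\tc\tau_3\tp\tau_2)$ with the induced piecewise population functions, and the edges of this small tree inherit epoch-subdivisions from the speciation times of $\sigma$. For each class $i$ of the big model, the rate functions $\mu_i(e,t)$ restrict to scalar rate functions on the (at most six) edge-pieces of the 3-taxon tree lying in epochs $3,2,1$. The symmetry hypothesis of $\mathcal M^{sym}$ — that permuting rate functions within epochs produces another class with the same weight and GTR parameters — descends to the restriction: grouping the classes of the big model by their common $(w_i,\pi_i,Q_i)$ and by the values of the restricted rate functions up to within-epoch permutation, each such group is (a positive multiple of) a 12-class uniform mixture of precisely the type treated in Lemma~\ref{lem:symrate}. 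Thus $F_{ab} = \sum_g c_g F_{ab}^{(g)}$ and $F_{ac}=\sum_g c_g F_{ac}^{(g)}$ where each $(F_{ab}^{(g)},F_{ac}^{(g)})$ is the pair produced by a Lemma~\ref{lem:symrate} model (with GTR matrix possibly $Q_i=0$; the degenerate $Q=0$ case only makes the inequality in (2) non-strict, which is harmless as long as strictness holds for at least one group — and it does, since some class must have $Q_i\ne 0$ for the model to be non-trivial, and we may assume so, else all distances are zero and no tree is identifiable in any model).

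Then I would simply apply the two conclusions of Lemma~\ref{lem:symrate} to each group: $F_{ab}^{(g)}, F_{ac}^{(g)}$ are positive definite, and $y^TF_{ab}^{(g)}y \ge y^TF_{ac}^{(g)}y$ for all $y$, with strict inequality for some $y$ in at least one group. This is exactly the hypothesis of the algebraic Lemma~\ref{lem:main} with $F_i = c_g F_{ab}^{(g)}$, $G_i = c_g F_{ac}^{(g)}$ (absorbing the positive constants $c_g$ into the matrices, or keeping them as the weights $w_i$), which yields $\det F_{ab} > \det F_{ac} > 0$, and hence $d_{LD}(a,b) < d_{LD}(a,c)$. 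The main obstacle is purely bookkeeping: making the claim that the 3-taxon restriction decomposes into Lemma~\ref{lem:symrate}-mixtures precise, i.e.\ checking that the within-epoch permutation symmetry of the full model, when restricted to three taxa, really does group the classes into uniform 12-class bundles (epoch~3 contributing $S_3$, epoch~2 contributing $S_2$, epoch~1 a single edge). Once that combinatorial accounting is in place, the analytic content is entirely inherited from Lemmas~\ref{lem:symrate}, \ref{lem:main}, and \ref{lem:delaycoal}, and the proof is a one-paragraph repetition of Theorem~\ref{thm:main}.
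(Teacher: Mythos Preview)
Your proposal is correct and follows essentially the same route as the paper: the paper's proof is the two-sentence remark preceding the theorem, stating that the restriction of an $\mathcal M^{sym}$ model to three taxa is a mixture of Lemma~\ref{lem:symrate}-type models, and then invoking the argument of Theorem~\ref{thm:main} with Lemma~\ref{lem:symrate} replacing Lemma~\ref{lem:delaycoal}. Your write-up simply unpacks that sentence (one small caution: the induced 3-taxon tree may inherit \emph{more} than six edge-pieces from the epochs of the large tree, so the bookkeeping you flag involves grouping those finer permutation classes into $S_3\times S_2$ orbits, but this is exactly the combinatorial accounting you anticipated and it goes through).
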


\section{Non-ultrametric trees and edge dependent rate variation}\label{sec:nonultra}

Having so far only considered species trees that are ultrametric in units of generations, we turn now to non-ultrametric species trees.
We will show that the log-det distance is poorly behaved on a model with a mixture of coalescent mixtures. Nonetheless, for a single-class coalescent mixture model we can still obtain strong results.
 
 \medskip
 
To explore the behavior of the log-det distance for non-ultrametric trees, we focus on the 4-point condition 
\cite{SempleSteel} to determine if pairwise distances fit an unrooted tree.
The $4$-point condition states that for each unrooted quartet tree $wx|yz$ displayed on the species tree
\begin{equation}
d(w,x)+d(y,z)< d(w,y)+d(x,z)=d(w,z)+d(x,y).\label{eq:4point}
\end{equation}
As the next example shows, for mixtures of coalescent mixtures on non-ultrametric species tree,
Inequality and Equation \eqref{eq:4point} need not hold.

\begin{ex} \label{ex:failure_ex}
Consider an MSC model on the $4$-leaf non-ultrametric rooted species 
tree $(\sigma,\tau) = ((a\tc N, b\tc N/10)\tc N/20,(c\tc N/10,d\tc N)\tc N/20 )$ with
branch lengths in numbers of generations and $N$ a constant population size over the entire tree.
Let $Q$ be a Jukes--Cantor rate matrix with off-diagonal entries
equal to $.01/N$, and suppose that for a 2-class mixture the
substitution process on each gene tree is chosen to be either $Q$ or $20Q$ with equal probability.

Then the expected site-pattern probability distribution can be computed exactly, giving, to several digits of accuracy,
\begin{align*}
d_{LD}(a,c)  + d_{LD}(b,d) \approx 0.996,  \\
d_{LD}(a,d)  + d_{LD}(b,c) \approx 0.952, \\
d_{LD}(a,b)  + d_{LD}(c,d) \approx 0.973.
\end{align*}

Since $d_{LD}(a,c)  + d_{LD}(b,d) \neq d_{LD}(a,d)  + d_{LD}(b,c)$, the 4-point equality fails. Moreover,
$d_{LD}(a,b)  + d_{LD}(b,c)$ fails to be the smallest of the three sums.

Indeed, the failure of the 4-point condition in this example is not linked to the inclusion of the coalescent process 
in the model.  
For a simpler mixture, in which sequences are generated on a $2$-class mixture on the metric gene tree
$(\sigma,\tau)$, then using the mutation parameters given above
we find
\begin{align*}
d_{LD}(a,c)  + d_{LD}(b,d) \approx 0.605,  \\
d_{LD}(a,d)  + d_{LD}(b,c) \approx  0.534, \\
d_{LD}(a,b)  + d_{LD}(c,d) \approx 0.566. 
\end{align*}
That the two classes in this example differ only by a rescaling of substitution rates suggests it would be hard to find 
any general circumstances on which log-det distances from mixtures on non-ultrametric trees behave well.
\end{ex}

Consequently, we restrict our attention to a single-class coalescent mixture model on a possibly non-ultrametric species tree, with GTR parameters $Q$, $\pi$. 
We allow population functions to depend on edges in the species tree and, in addition, allow scalar-valued rate functions to be
edge-dependent too.  That is, our model parameters are an arbitrary rooted metric species tree $\sigma$ in units of generations,  
GTR parameters $Q,\pi$, and a collection of population functions $N_e$ and rate functions $\mu_e$ for each edge of $\sigma$. 
We denote this model, which generalizes {\coalModel}+\gtrmu, by $\mathcal N$. We note also that this model is a generalization 
of that of \cite{Roch15}, in that it allows  a GTR substitution model, time-varying population sizes, and scalar-valued mutation rates 
on each edge of the species tree.

\begin{thm}\label{thm:nonultra}
Under the coalescent mixture model $\mathcal N$ on a rooted 
metric species tree, the unrooted tree topology is identifiable from the 
expected pairwise log-det distances.
\end{thm}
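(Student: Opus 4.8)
The plan is to use the $4$-point condition \cite{SempleSteel}: it suffices to show that the expected log-det distances, for every quartet $wx|yz$ of taxa displayed on the species tree, satisfy both the inequality and the equality of \eqref{eq:4point}. Throughout, note that $d_{LD}(a,b)$ depends only on $Q,\pi$ and the portion of $\sigma$ (with its population and rate functions) at and above $\MRCA(a,b)$, so it is harmless to work with the induced subtree on four taxa.

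\emph{Step 1: a closed form for the pairwise distance.} Since $\mathcal N$ uses a single GTR matrix $Q = S\diag(\lambda)S^{-1}$ with $S = \diag(\pi)^{-1/2}U$, all Markov matrices along edges of the species tree, together with the coalescent-averaged substitution matrix of Remark \ref{rem:extra}, are simultaneously diagonalized by $S$ (Lemmas \ref{lem:diagQ} and \ref{lem:Markov}). Arguing as in the proof of Lemma \ref{lem:delaycoal}, I expect that for a pair $(a,b)$ with species-tree MRCA $v$,
$$\diag(\pi)^{-1}F_{ab} = S\diag\big(h_{ab}(\lambda)\big)S^{-1},\qquad h_{ab}(\lambda) = \exp\!\big(\lambda(\delta(a,v)+\delta(b,v))\big)\,I_v(\lambda),$$
where $\delta(\cdot,\cdot)$ is substitution-unit path length on $\sigma$ (a sum of edge integrals $\int\mu_e$), $I_v(\lambda) = \int_0^\infty \exp(2\lambda\tilde s_v(z))\,c_v(z)\,dz$ records the extra substitutions from the coalescent delay above $v$ (with $\tilde s_v$ the substitution length and $c_v$ the coalescent density built from the edges at and above $v$, as in Lemma \ref{lem:coalpdf}), and $\diag(w)$ is the diagonal matrix formed from a vector $w$. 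In particular $F_{ab}$ is symmetric positive definite, so $|\det F_{ab}| = \det F_{ab} > 0$. Because $g_a = g_b = \prod_i \pi_i$ at every taxon, the last term of \eqref{eq:dLD} is the same for all pairs, and
$$d_{LD}(a,b) = c_Q\big(\delta(a,v)+\delta(b,v)\big) + \phi(v),$$
with $c_Q := -\tfrac1k\sum_{i\ge2}\lambda_i > 0$ and $\phi(v) := -\tfrac1k\sum_{i\ge2}\ln I_v(\lambda_i) \ge 0$. The first summand is an additive tree metric on $\sigma$ with positive edge weights $c_Q\int\mu_e$; the only obstruction is the term $\phi(\MRCA(a,b))$, which is a function of the MRCA node and is \emph{not} additive.

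\emph{Step 2: the key lemma.} For an edge $e$ of $\sigma$ from child node $p$ to parent node $q$, with $\ell_e := \int_0^{\tau_e}\mu_e$, I will prove
$$\phi(p) < \phi(q) + 2c_Q\,\ell_e.$$
Apply Lemma \ref{lem:delaycoal} to the population and rate functions obtained by concatenating those on $e$ with those on the edges above $q$. In the notation of that lemma, its proof shows $\diag(\pi)^{-1}C(0) = S\diag(I_p(\lambda))S^{-1}$, while $\diag(\pi)^{-1}C(\tau_e) = S\diag\big(\exp(2\lambda\ell_e)I_q(\lambda)\big)S^{-1}$, since conditioning on no coalescence before time $\tau_e$ simply prepends two independent copies of the substitution process along $e$ to the process above $q$. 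Because $y^TC(x)y$ is nonincreasing in $x$ for every $y$, and strictly decreasing for some $y$, each of the $S$-diagonal entries of $\diag(\pi)^{-1}C(x)$ is nonincreasing in $x$, and at least one decreases strictly --- necessarily one with $\lambda_i < 0$, as the entry for $\lambda_1 = 0$ is constant. Hence $I_p(\lambda_i) \ge \exp(2\lambda_i\ell_e)I_q(\lambda_i)$ for all $i$, with strict inequality for some index with $\lambda_i < 0$; applying $-\tfrac1k\sum_{i\ge2}\ln(\cdot)$ yields the bound. I stress that $\phi$ itself need not be monotone along edges; what the argument needs is exactly this ``balanced'' estimate, whose correction $2c_Q\ell_e$ matches the $d_{LD}$-cost of traversing $e$ on both sides of a path.

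\emph{Step 3: verifying \eqref{eq:4point}.} Fix $w,x,y,z$ with $wx|yz$ displayed, and set $S_1 = d_{LD}(w,x) + d_{LD}(y,z)$, $S_2 = d_{LD}(w,y) + d_{LD}(x,z)$, $S_3 = d_{LD}(w,z) + d_{LD}(x,y)$. Up to the symmetries $w \leftrightarrow x$, $y \leftrightarrow z$, and $\{w,x\} \leftrightarrow \{y,z\}$, the root of the induced four-taxon subtree lies either on its internal edge, making the induced rooted tree $((w,x),(y,z))$, or on a pendant edge, making it a caterpillar such as $(w,(x,(y,z)))$. In each case I substitute the formula of Step 1 and cancel the additive path-length contributions by bookkeeping the $\delta$-terms: the cross-pair MRCAs make $S_2 = S_3$ identically, and $S_2 - S_1$ collapses to a sum of terms of the form $2c_Q\ell_e + \phi(q) - \phi(p)$, one over each internal edge $e = pq$ that separates $\{w,x\}$ from $\{y,z\}$ in $\sigma$ --- two such terms (for the edges incident to the root) in the balanced case, one (for the internal quartet edge) in the caterpillar case. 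By Step 2 each such term is strictly positive, so $S_1 < S_2 = S_3$, which is exactly \eqref{eq:4point}. As this holds for every displayed quartet (with the corresponding equalities at any polytomies), the unrooted species tree topology is identifiable.

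I expect Step 2 --- the ``balanced'' bound on $\phi$ across an edge --- to be the main obstacle, together with the observation in Step 3 that although $\phi(\MRCA(\cdot,\cdot))$ is not additive it enters the $4$-point comparison only through the combinations $2c_Q\ell_e + \phi(q) - \phi(p)$ that Step 2 controls. Once $d_{LD}$ is put in the closed form of Step 1, the remaining verification is routine.
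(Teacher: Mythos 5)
Your argument is correct, and it rests on the same analytic engine as the paper's proof --- the monotonicity in the delay $x$ established in Lemma \ref{lem:delaycoal} --- but it packages the four-point verification differently. The paper reduces to 4-taxon species trees, contracts pendant edges to length zero, writes each pairwise array as $C(Q,\mu,N,x)$ times inverse Markov factors, and obtains the strict four-point inequality from $\det C(Q,\mu,N,0)>\det C(Q,\mu,N,x)$ via Lemma \ref{lem:main} applied to a one-class mixture. Your Step 2 bound $\phi(p)<\phi(q)+2c_Q\ell_e$ is exactly the scalar (eigenvalue-wise, logarithmic) form of that determinant inequality, so the key estimate is shared; what is genuinely different is your Step 1 decomposition $d_{LD}(a,b)=c_Q\bigl(\delta(a,v)+\delta(b,v)\bigr)+\phi(v)$ with $v=\MRCA(a,b)$, exhibiting the expected log-det distance as an additive tree metric plus a node-dependent ``coalescent inflation'' $\phi$, together with the telescoping of Step 2 along paths. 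That lets you handle an arbitrary displayed quartet in the full species tree directly (and makes transparent both why $S_2=S_3$ and why only the edges separating $\{w,x\}$ from $\{y,z\}$ contribute), in place of the paper's reduction to the two zero-pendant-edge shapes; it also bypasses Lemma \ref{lem:main} and Hadamard's inequality entirely, since in the one-class model all arrays are simultaneously diagonalized by $S$, and the explicit closed form for $d_{LD}$ is a pleasant by-product. Two small points worth making explicit: strict inequality in Step 2 needs $\tau_e>0$ (i.e., positive-length internal edges, the same implicit assumption as in the paper, and with $Q\neq 0$ so that some $\lambda_i<0$), and the passage from the quadratic-form statement of Lemma \ref{lem:delaycoal} to monotonicity of the individual $S$-diagonal entries should cite the reduction to the scalar functions $f(\lambda,\mu,N,x)$ inside that lemma's proof (Lemmas \ref{lem:changepop} and \ref{lem:boundpop}), though with the fixed congruence by $S$ the two formulations are indeed equivalent.
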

\begin{proof}
It suffices to consider 4-taxon species trees, and show the expected log-det distance satisfies the 4-point condition. 
Since pendant edges add the same values to all pairwise distances, 
we can reduce to the case that all terminal branches of the species tree have length 0.  We therefore 
need only consider the two species trees
$((a\tc 0,b\tc 0)\tc x,\, (c\tc 0,d\tc 0)\tc y)$ and $(((a\tc 0,b\tc 0)\tc x,\, c\tc 0)\tc y, \, d\tc 0)$.

For both these trees, the equality in the 4-point condition \eqref{eq:4point} is a consequence of the 
exchangeability of taxa $a$ and $b$ in the model.

To prove the inequality, we begin with the caterpillar tree $(((a\tc 0,b\tc 0)\tc x,c\tc 0)\tc y,d\tc 0)$.
Numbering edge rate functions  by the number of the edge's descendants,
define the rate function
$$\mu(t)=\begin{cases} \mu_2(t)&\text{ if }0\le t<x,\\
\mu_3(t)&\text{ if }x\le t<x+y,\\
\mu_4(t)&\text{ if }x+y\le t,\end{cases}$$ and a population function
 $N(t)$ similarly from the edge population functions.
In the notation of Lemmas \ref{lem:Markov} and \ref{lem:delaycoal},  pairwise pattern frequency arrays are then
\begin{align*}
F_{ab}&=C(Q,\mu,N,0),\\
F_{cd}&=C(Q,\mu,N,x+y)M(\mu,Q,x)^{-1}M(\mu,Q,x+y)^{-1},\\
F_{ac}&=C(Q,\mu, N, x)M(\mu,Q,x)^{-1},\\
F_{bd}&=C(Q,\mu,N,x+y) M(\mu,Q,x+y)^{-1}.
\end{align*}
Using the formula for the log-det distance, 
one sees that the 4-point inequality will follow from
$$\det(C(Q,\mu,N,0))>\det(C(Q,\mu,N,x)).$$ 
But this holds by Lemma \ref{lem:delaycoal} and Lemma \ref{lem:main} (applied to a 1-class mixture).

For the balanced species tree $((a\tc 0,b\tc 0)\tc x,(c\tc 0,d\tc 0)\tc y)$, consider population and rate
functions $N_1$, $\mu_1$, $N_2$, $\mu_2$, 
where those with subscript 1 (respectively 2) are defined piecewise using all edges above the most recent common ancestor of $a,b$ (respectively $c,d$). Then
\begin{align*}
F_{ab}&=C(Q,\mu_1,N_1,0),\\
F_{cd}&=C(Q,\mu_2,N_2,0),\\
F_{ac}=F_{bd}&=C(Q,\mu_1, N_1, x)M(\mu_1,Q,x)^{-1}M(\mu_2,Q,y)\\
&=C(Q,\mu_2, N_2, y)M(\mu_2,Q,y)^{-1}M(\mu_1,Q,x).
\end{align*}
Using the definition of the log-det distance, these expressions show the 4-point inequality will follow from showing
both
$$\det(C(Q,\mu_1,N_1,0))>\det(C(Q,\mu_1,N_1,x))$$ and
$$\det(C(Q,\mu_2,N_2,0))>\det(C(Q,\mu_2,N_2,y)).$$ But these are established by Lemmas \ref{lem:delaycoal} and \ref{lem:main}.
\end{proof}

\section{Discussion}\label{sec:disc}

We have proved that for an ultrametric species tree, using the log-det distance and the 3-point condition, the rooted species tree topology 
is identifiable for very general models of sequence evolution. 
These models include the coalescent process, with varying population sizes, modeling incomplete lineage sorting (ILS) 
and a many-class mixture of base substitution processes with rate variation that is appropriate for genomic sequence data, including 
one with a symmetric form of lineage-specific heterotachy. Although for non-ultrametric species trees one must restrict to a non-mixture model 
with only a single class for the substitution process, even in this case we can establish the identifiability of the unrooted
species topology using the 4-point condition. 

With identifiability proved, it follows easily that inference of the species tree by computing the log-det distance from 
concatenated genomic data, and then applying standard distance methods of tree building or selection yields statistically consistent
topology estimates for these models. This includes using well-known methods such as UPGMA (for ultrametric trees only), 
Neighbor Joining, and Minimum Evolution 
(though this last is usually implemented heuristically, rather than exactly).
These results show both the potential (consistent estimates of topologies can be computed 
quickly for a very general model) and pitfalls (non-ultrametric tree topology estimates are only consistent 
under a more restrictive model) of using log-det distance based tree construction for species tree inference.

\smallskip

Previous works addressing ILS in inference from aligned genomic data have typically assumed a single base substitution 
process with constant population sizes and constant scalar-valued mutation rates on each edge of the tree. In addition 
to their being less plausible biologically,  models assuming  constant population sizes and mutation rates on edges are 
not well-behaved under passing to subsets of taxa, as edges of the induced tree are formed by merging edges 
in the original tree, and thus may not have constant populations and mutation rates.
To eliminate this, 
one can posit a model with a single population size and mutation rate over the entire tree, but that further reduces biological plausibility. 
The modeling assumptions behind a log-det approach do not have these weaknesses.

Moreover, our modeling assumption of ultrametricity of the species tree in generations is one whose reasonableness can often be judged for biological datasets. For instance, for a collection of insects with rigid lifecycles of roughly equal duration, or even for larger organisms known to have similar generation times at present, it may be quite plausible. Although some species tree inference methods which infer individual gene trees can also avoid  
making strong assumptions relating the three time scales inherent to the inference problem (in generations, coalescent units, and  substitution units), they do so at the cost of either introducing poorly-understood inference error in the gene trees, or the computational burden of Bayesian methods. In fact, Bayesian methods often restrict models of  populations size to quite simple families, and little is done to judge their plausibility. Thus all current methods are built on assumptions that may be seriously violated for some data sets.

In addition to ultrametricity, we made two other significant assumptions. 
One is that heterotachy of substitution processes must be either time-dependent, or at worst exhibit the sort of 
``average time-dependence across the genome" captured in Theorem \ref{thm:sym}. In particular, if on species tree 
branches leading to some taxa the rate of mutation increased across the entire genome, then our strongest result is  
Theorem \ref{thm:nonultra} which does not allow a mixture of substitution processes.
The other assumption is that substitution processes are time-reversible. Of course this is routinely invoked by standard methods of gene tree inference, as well as other methods of species tree inference. Nonetheless, we highlight that our arguments using quadratic forms require time-reversibility for every substitution process, and the robustness of the inference scheme to its violation has not been explored.

\smallskip

While the simulation study of \cite{Rusinko17} is not strictly applicable to testing the log-det distance method of species tree inference studied here (as it used a GTR+Gamma distance rather than log-det, with simulation conditions deviating from our model), our own investigations (not shown) indicate similar strong performance of log-det, both on the same simulations and others. However, adequate testing will require much careful thought in designing simulation conditions that are plausible for genomic data. For instance, whether rate matrices for the various classes should be sampled from a  unimodal distribution, or deviations from a molecular clock for individual genes can be modeled by independent scalings on edges, as they were in the simulated data of \cite{Chou2013} and \cite{Bayzid2015} used by \cite{Rusinko17}, may or may not
be acceptable approximations of empirical data. In our view, the results of \cite{Rusinko17} 
underscore that in current simulations of genomic data it is not clear whether an  `easy' or `difficult' region of parameter space is being investigated. While the anomolous gene trees concept \cite{DegRos2006,kubatko2007inconsistency} has illuminated how expected gene tree topologies may make species tree inference difficult, the impact of variation in substitution process across the genome is largely unexplored.

Nonetheless, the generality of the model to which log-det distance based inference can be consistently applied, combined with its simplicity of implementation, speed on large data sets,
and preliminary indications of good performance in simulation, suggest this method should provide a basic benchmark to which other species tree inference methods should be compared.

\section*{Acknowledgements}
Work on this project was supported by the National Institutes of Health grant R01 GM117590, awarded under the  Joint DMS/NIGMS Initiative to Support Research at the Interface of the Biological and Mathematical Sciences to ESA and JAR, and by the Mathematical Biosciences Institute and the National Science Foundation under grant DMS-1440386 for CL. 

\bibliographystyle{plain}
\bibliography{logdet_arxiv}

\end{document}